\DeclareMathAlphabet{\mathbbo}{U}{bbold}{m}{n}
\newcommand{\dspi}{\mathbbo{n}}
\newcommand{\dsg}{\mathbbo{g}}
\newcommand{\g}{g}
\newcommand{\dsh}{\mathbbo{h}}
\newcommand{\dsw}{\mathbbo{r}}
\newcommand{\w}{r}
\newcommand{\varlambda}{\zeta}
\newcommand{\estimator}[2]{\Upsilon^{#1}_{\mathrm{#2}}}
\newcommand{\estimate}[2]{\widehat{\mathpzc{#1}}^M_\mathrm{#2}}
\DeclareFontFamily{OT1}{pzc}{}
\DeclareFontShape{OT1}{pzc}{m}{it}{<-> s * [1.05] pzcmi7t}{}
\DeclareMathAlphabet{\mathpzc}{OT1}{pzc}{m}{it}
\DeclareFontFamily{OT1}{pzcm}{}
\DeclareFontShape{OT1}{pzcm}{m}{it}{<-> s * [1.35] pzcmi7t}{}
\DeclareMathAlphabet{\mathpzcm}{OT1}{pzcm}{m}{it}
\DeclareMathAlphabet{\mathantt}{OT1}{antt}{m}{it}
\newcommand{\pot}{\mathpzc{U}}
\newcommand{\n}{\Psi^\varLambda}
\newcommand{\na}{\Psi^\varLambda_\aux}
\newcommand{\aux}{\mathantt{a}}
\newcommand{\lpaux}{\mathrm{p}^{\varLambda}_\aux}
\newcommand{\lpo}{\mathrm{p}^\varLambda_0}
\newcommand{\1}{\mathbf{1}}
\newcommand{\2}{\mathbf{1}}
\newcommand{\pmf}{\mathpzc{A}}
\newcommand{\wpmf}{\widehat{\mathpzc{A}}}
\newcommand{\pmff}{\mathpzc{F}}
\newcommand{\obs}{\mathcal{O}}
\newcommand{\p}{\mathrm{p}}
\newcommand{\paux}{{\p}_{\aux}}
\newcommand{\bpaux}{\p^{\mathcal{Q}}_{\aux}}
\newcommand{\bpi}{\bar{\pi}}
\newcommand{\sigmaux}{\sigma_\aux}
\newcommand{\e}{\mathbbm{E}}
\newcommand{\eaux}{\e_{\aux}}
\newcommand{\beaux}{\e^{\mathcal{Q}}_{\aux}}
\newcommand{\leaux}{\e^{\varLambda}_{\aux}}
\newcommand{\raux}{\mathpzc{B}_{\aux}}
\newcommand{\ro}{\mathpzc{B}_{0}}
\newcommand{\restrain}{\mathpzc{R}}
\newcommand{\rc}{q}
\newcommand{\brc}{{\bar{q}}}
\newcommand{\rcm}{\mathrm{\rc}}
\newcommand{\xis}{{\xi^\star}}
\newcommand{\rf}{\mathbbo{\w}^\lambda_\aux}
\newcommand{\lambdag}{{\lambda^\star}}
\newcommand{\convlaw}{\overset{\mathcal{L}}{\underset{M \rightarrow + \infty}{\longrightarrow}}}
\newcommand{\cov}{\mathrm{cov}_\aux}
\newcommand{\vara}{\mathbbm{V}_\aux}
\newcommand{\iam}{\mathtt{I}^M}
\newcommand{\cardTheta}{\| \varLambda \|}
\newtheorem{theorem}{Theorem}[section]
\newtheorem{lemma}[theorem]{Lemma}
\newcommand{\ct}[1]{\textcolor{black}{#1}}
\newcommand{\trc}{{\rc}}
\begin{document}

\title{Estimating thermodynamic expectations and free energies in expanded ensemble simulations: systematic variance reduction through conditioning}

\author{Manuel Ath\`enes}
\email{manuel.athenes@cea.fr}
\affiliation{CEA, DEN, Service de Recherches de M\'etallurgie Physique, Universit\'e Paris-Saclay, F-91191 Gif-sur-Yvette, France}
\author{Pierre Terrier}
\email{pierre.terrier@cea.fr}
\affiliation{CEA, DEN, Service de Recherches de M\'etallurgie Physique, Universit\'e Paris-Saclay, F-91191 Gif-sur-Yvette, France}
\affiliation{Universit\'e Paris-Est, CERMICS (ENPC), INRIA, F-77455 Marne-la-Vall\'ee, France}
\begin{abstract}
Markov chain Monte Carlo methods are primarily used for sampling from a given probability distribution and estimating multi-dimensional integrals based on the information contained in the generated samples. Whenever it is possible, more accurate estimates are obtained by combining Monte Carlo integration and integration by numerical quadrature along particular coordinates.  
We show that this variance reduction technique, referred to as conditioning in probability theory, can be advantageously implemented in \emph{expanded ensemble} simulations. These simulations aim at estimating thermodynamic expectations as a function of an external parameter that is sampled like an additional coordinate. Conditioning therein entails integrating along the external coordinate by numerical quadrature. We prove variance reduction with respect to alternative standard estimators and demonstrate the practical efficiency of the technique by estimating free energies and characterizing a structural phase transition between two solid phases. 
\end{abstract}

\maketitle

\section{Introduction}
One important goal of molecular simulation is the estimation of equilibrium thermodynamic quantities. Any such quantity corresponds to the ensemble average of an observable 
over the space of accessible states $\mathcal{Q}$ of a multi-dimensional system. The contribution of any given state to the ensemble average is weighted by its occurrence probability in the considered thermodynamic ensemble (the Gibbs-Boltzmann probability). If one knows how to sample from this probability distribution, for instance  by using a Markov chain Monte Carlo method or a molecular dynamics algorithm, then any thermodynamic quantity can be accurately estimated in practice by taking advantage of the ergodic theorem.~\cite{chandler:1987,frenkel:2001,wales:2003,lelievre:2010} 

A major bottleneck facing molecular simulation is broken numerical ergodicity, 
i.e., insufficient sampling of important configurations occurring during rare excursions or inability to explore important regions because of kinetic traps~\cite{athenes:2014} on the simulation timescale. 

A versatile and widespread solution to alleviate this problem consists of sampling an expanded ensemble~\cite{lyubartsev:1992,burov:2006} that is a mixture of canonical ensembles equipped with an auxiliary biasing potential $\aux(\lambda)$. The subensemble indexes are referred to by an external parameter $\lambda$ which corresponds to extensive quantities (such as the volume or the numbers of particles) or intensive quantities (thermodynamic forces such as pressure, chemical potentials or inverse temperature). The  method of expanded ensembles~\cite{lyubartsev:1992,burov:2006} also includes simulated tempering,~\cite{marinari:1992,geyer:1995,park:2007} simulated scaling~\cite{li:2007}, unified free energy dynamics~\cite{cuendet:2014} and extended Lagrangian metadynamics~\cite{iannuzzi:2003,laio:2008} methods. 
In all these implementations, the external parameter behaves like an additional coordinate of the system. 
Improved numerical ergodicity is usually achieved compared to direct Monte Carlo or molecular dynamics simulations by allowing the system to circumvent barriers between metastable basins or even to cross them when, for instance, the external parameter is associated with inverse temperature and takes small values. 
Because of its popularity and versatility, the expanded ensemble method has been the subject of numerous studies~\cite{iba:2001,mitsutake:2001,chodera:2007,park:2008,chodera:2011,cao:2014,terrier:2015} over recent years. To evaluate the thermodynamic expectations, several estimators have been proposed, namely, the (standard) binning estimator,~\cite{lyubartsev:1992} a standard reweighting estimator~\cite{terrier:2015}, a self-consistent reweighting estimator~\cite{chodera:2007} called \textit{simulated tempering weighted histogram analysis method} (STWHAM), a histogram reweighting estimator~\cite{lesage:2016} called \textit{corrected z-averaged restraints} (CZAR) and an \textit{adiabatic reweighting} (AR) estimator.~\cite{cao:2014,terrier:2015}
Self-consistent reweighting estimators are known under various names such as the \textit{Bennett acceptance ratio} (BAR) method~\cite{bennett:1976}, the \textit{wheighted histogram analysis method}~\cite{swendsen:1986,ferrenberg:1989} (WHAM), the reverse logistic regression~\cite{geyer:1994}, bridge sampling,~\cite{gelman:1998} the multi-state BAR method~\cite{shirts:2008}, binless WHAM~\cite{tan:2012} and the global likelihood method.~\cite{kong:2003,tan:2004} \ct{These various methods are equivalent in the sense that they involve first harvesting data from a number of independent simulations carried out for a predetermined set of external parameter values, possibly using the expanded ensemble method, and second solving a set of nonlinear equations.}

In contrast, the binning, standard reweighting and adiabatic reweighting estimator do not require any postprocessing and can be implemented online during a single simulation, provided the biasing potential ensures ergodic sampling in the expanded ensemble. This one is usually constructed adaptively during the course of the simulation in such a way that uniform sampling is achieved along the external parameter. A similar proceeding is done in multicanonical sampling methods,~\cite{berg:1992,wang:2001,laio:2002,junghans:2014} for which adequate biasing factors are associated with a generalized coordinate rather than an external parameter. The biasing factors are then set equal to the inverse of the density of states with respect to the associated coordinate that can be a reaction coordinate~\cite{darve:2001} or an order parameter,~\cite{berg:1992} depending on whether one is interested in monitoring a reaction along its transition pathway or probing the various structures of a system.  A typical generalized coordinate is also the internal energy of the system.~\cite{wang:2001}

The biasing factors may be constructed using various techniques that can be classified into two categories: \textit{adaptive biasing potential} (ABP) methods \---- along internal~\cite{wang:2001,marsili:2006,fort:2014} or external~\cite{brukhno:1996,brukhno:1996b} coordinates\---- and \textit{adaptive biasing force} (ABF) methods \---- along internal~\cite{darve:2001,darve:2008,henin:2010,lelievre:2008,comer:2014} or external~\cite{park:2007,cao:2014,lesage:2016} coordinates\----,  depending on whether it is the auxiliary biasing potential that is updated or its gradient. 
A review of ABP and ABF techniques can be found in Ref.~\citenum{lelievre:2010}. 
A detailed proof of convergence is given in Ref.~\citenum{fort:2014} for the the ABP algorithm~\cite{wang:2001} proposed by Wang and Landau and in Ref.~\citenum{lelievre:2008} for the ABF algorithm of Darve and Pohorille. 
\ct{The  biasing potential may be adapted on the free energy or a part of it. The latter variant technique,~\cite{fort:2016} referred to as \textit{partial biasing}, may accelerate the adaptive sampling process in some circumstances.} 
Besides, Tan~\cite{tan:2015} showed how to reduce the asymptotic statistical variance of an ABP scheme~\cite{footnote1} dedicated to expanded ensemble simulations through a procedure known as \emph{conditioning} in probability theory.~\cite{jourdain:2013,mcbook}  In expanded ensembles, the procedure consists of combining Monte Carlo integration over the internal coordinate space and integration by numerical quadrature along the external parameter. 

\ct{Considerable effort has been devoted to the development of efficient ABP and ABF procedures for expanded ensemble simulations, yet relatively few studies~\cite{tan:2015,fort:2016} have focused on optimizing the biasing potential or the estimator. 
Optimizing the biasing potential is a difficult problem involving functional minimization~\cite{pham:2012} of the variances associated with an estimator that is also to be optimized. We herein address the easier problem of optimizing the estimator given the biasing potential by exploiting the conditioning approach proposed by Tan in Ref.~\citenum{tan:2015}. A \textit{partial biasing} technique for constructing a biasing potential that is optimal for the optimized estimator will be presented in a separate paper. We therefore consider that the adaptive sampling process has been successfully completed: the biasing potential is frozen and does not depend on the simulation time anymore.  We then show how to condition the binning,\cite{lyubartsev:1992} standard reweighting,\cite{terrier:2015} self-consistent reweighting~\cite{chodera:2007} and histogram reweighting~\cite{lesage:2016} estimators of thermodynamic expectations, and demonstrate that the conditioning procedure systematically leads to the formulation of the adiabatic reweighting estimator. By comparing the asymptotic variances of the involved estimators, we eventually deduce that the conditioning procedure ensures systematic variance reduction, entailing that the adiabatic reweighting estimator is optimal among the large class of considered estimators.}

The paper is organized as follows: in section~\ref{section:expectations}, we define the expanded ensemble and show that thermodynamic ensemble averages can be cast as conditional expectations of the observable given the value of the external coordinate and that free energies can be expressed as the co-logarithm of a total expectation. In Section~\ref{section:conditioning}, the adiabatic reweighting estimator is constructed through the conditioning of a generic estimator that covers the binning and standard reweighting approaches as particular cases and also of the self-consistent reweighting estimator STWHAM. 
In both situations, the systematic reduction of the asymptotic variances is proved using the law of total variance.
In section~\ref{sec:assessment}, we assess the variance reduction on free energy and rare event simulations of a simple model. 
In Sections~\ref{application:lj38} and~\ref{sec:characterization}, we implement the conditioning approach to estimate the free energy along a bond-orientational order parameter, making a comparison with the CZAR approach, and to determine a structural transition temperature between two solid phases in an atomic cluster, respectively. The specific ABF algorithms that are used to construct the biasing potentials in Sections~\ref{application:lj38} and~\ref{sec:characterization} are described in the \textcolor{blue}{Supplementary Material}. 

\section{Thermodynamic expectations in expanded ensemble}
\label{section:expectations}
\subsection{Preliminary definitions}
Let first define the thermodynamic expectations and free energies along $\lambda$ in an expanded ensemble. The external parameter $\lambda$ is often associated with a thermodynamic force (e.g. temperature, pressure or averaged mechanical restraint) that, for finite systems, varies continuously with respect to its thermodynamic conjugate. However, we consider here that the external coordinate takes discretized values inside a finite set denoted by $\varLambda$, over which appropriate summations will correctly approximate integrations along $\lambda$ (the aforementioned numerical quadrature). Systematic discretization errors are negligible. We assume that the auxiliary biasing potential $\aux(\lambda)$ has already been adapted using one of the aforementioned ABP or ABF techniques and does not evolve anymore during the simulations (time homogeneity). We denote the particle position coordinates by $\rc \in \mathcal{Q}$ where $\mathcal{Q}$ is the position space. 
We define the dimensionless potential energy in the expanded space $\varLambda\times \mathcal{Q}$ as $\pot (\lambda,q)$  
and  the probability density within the expanded ensemble as
\begin{equation}
\mathrm{p}_\aux(\lambda,\rc) = \exp \left[\aux(\lambda) - \pot(\lambda,\rc) - \na \right], 
\end{equation}
where the log-normalizing constant $\na$ corresponds to the logarithm of the partition function in the expanded ensemble: 
\begin{equation}
 \na = \ln \sum_{\varlambda\in\varLambda}\int_{\mathcal{Q}} \exp\left[\aux(\varlambda) - \pot(\varlambda,\brc) \right] d\brc. 
\end{equation}
The constant $\na$ is sometimes referred to as the entropic potential. By convention, we denote the extended coordinates by $\left( \varlambda,\brc\right)$  when they are summed, integrated or sampled. Herein, the observable $(\lambda,\rc) \mapsto \obs(\lambda,\rc)$ will depend both on the external parameter $\lambda\in\varLambda$ and the internal coordinates $\rc\in \mathcal{Q}$. 
The total expectation associated with the observable is given by: 
\begin{equation}
 \eaux\left[\obs \right] = \sum_{\varlambda\in\varLambda}\int_\mathcal{Q} \obs(\varlambda,\brc) \mathrm{p}_\aux(\varlambda,\brc)d\brc. 
\end{equation}
For simplicity, we omit the dependence of $\obs$ on the stochastic variables inside the expectations. 
Adopting Landau's definition, the free energy $\pmf(\lambda)$ along the external parameter is defined as the co-logarithm of the expected value of the indicator function $\1_\lambda(\varlambda)$: 
\begin{equation}\label{def:fe}
 \pmf(\lambda) = - \ln \e \left[\1_\lambda \right], 
\end{equation}
where the notation $\e$ corresponds to expectation $\eaux$ with biasing potential $\aux(\lambda)$ set to 0 (unbiased ensemble):
\begin{equation}\label{expectation:total}
  \e \left[\obs \right] = \sum_{\varlambda\in\varLambda}\int_\mathcal{Q} \obs(\varlambda,\brc) \exp \left[ -\pot(\varlambda,\brc) - \n_0 \right] d\brc. 
\end{equation}
The free energy allows us to formally define the conditional probability of $\rc$ given that $\varlambda$ is equal to $\lambda$ 
\begin{equation}\nonumber
\pi(\rc |\lambda) = \frac{\p_0(\lambda,\rc)}{\e\left[\1_\lambda \right]} = \exp \left[ \pmf(\lambda)-\pot(\lambda,\rc) - \n_0  \right], 
\end{equation}
as well as the conditional probability 
\begin{eqnarray}
   \e\big[ \obs |\lambda \big] =\int_\mathcal{Q} \obs(\lambda,\brc) \pi(\brc |\lambda)d\brc. 
\end{eqnarray}
The fact that the relation $\pi(\rc |\lambda) = \paux(\lambda,\rc) \slash \eaux\left[\1_\lambda \right]$ is valid whatever the value of the biasing potential entails the useful relation between total and conditional expectations
\begin{equation}
 \e\big[ \obs|\lambda \big] = \frac{\eaux \left[ \1_\lambda(\cdot) \obs(\lambda,\cdot) \right]}{\eaux\left[\1_\lambda\right]}. 
\end{equation}
For clarity, the constant variable $\lambda$ is written explicitly within the observable expectations $\eaux$, while the stochastic variable associated with $\brc$ is denoted by a dot.  
The denominator of the fraction corresponds to the marginal probability of $\lambda$, 
since we have
\begin{eqnarray}\label{def:histogram_lambda}
 \lpaux(\lambda) &=& \int_\mathcal{Q} \paux(\lambda,\brc)d\brc = \eaux\left[\1_\lambda \right]. 
\end{eqnarray}
Similarly, the marginal probability of $\rc$ is equal to the expected values of the delta distribution $\delta_{\rc}(\brc)$ 
\begin{eqnarray}
  \bpaux(\rc) &=& \sum_{\varlambda\in\varLambda} \paux(\varlambda,\rc)  = \eaux\left[\delta_{\rc} \right]. 
\end{eqnarray}
Let denote the expectations defined with respect to the marginal probabilities $\lpaux$ and $\bpaux$ by $\leaux$ and $\beaux$, respectively. These expectations are useful for expressing the laws of total expectations, with conditioning done with respect to  $\lambda$: 
\begin{eqnarray} \label{law:total_expectation_lambda}
 \begin{aligned}
 \eaux\left[\obs \right] &= \leaux \left[ \eaux \left[ \obs(\lambda,\cdot) | \lambda \right] \right]  \\
 &=\sum_{\lambda\in\varLambda} \eaux\left[  \obs(\lambda,\cdot)| \lambda \right]  \lpaux(\lambda), 
 \end{aligned}
\end{eqnarray}
or with respect to $q$: 
\begin{eqnarray}\label{def:marg_expectation}
 \begin{aligned}
 \eaux\left[\obs \right] & = \beaux \left[ \eaux \left[ \obs(\cdot,\rc) | \rc \right] \big. \right] \\
 & = \int_{\mathcal{Q}} \eaux\left[  \obs(\cdot,\rc)| \rc \right]  \bpaux(\rc)d\rc. 
 \end{aligned}
\end{eqnarray}
By convention, the variables $\lambda$ and $\rc$ are stochastic within expectations $\beaux$ and $\leaux$, respectively, but both refers to constant values within total expectations of type $\eaux$. 
As an illustration, when the observable is set to $\2_\xis [\xi(q)]$, the indicator function associated with reaction coordinate $\brc \mapsto \xi(\brc)$ and bin $\xis$, then the conditional expectation of the indicator function given $q$ writes 
\begin{eqnarray}
\eaux\left[ \2_{\xis}\circ\xi(q) | q \right] &= \2_{\xis}\circ\xi(q)
\end{eqnarray}
where the empty symbol $\circ$ denotes the functional composition. The expected value of $\2_\xis\circ\xi$ with respect to the marginal probability distribution $\bpaux$ writes: 
\begin{eqnarray}\label{def:conditional_expectation_given_q}
\beaux\left[ \2_{\xis}\circ\xi(q) \right] = \beaux\left[ \2_{\xis}\circ\xi \right] = \eaux\left[ \2_{\xis}\circ\xi\right] . 
\end{eqnarray}
We eventually formulate the law of total variance, an elementary identity of probability theory, which states that the (total) variance of $\obs$ with respect to probability $\paux$ is equal to the sum of two terms, the expectation of the conditional variances of $\obs$ given $q$ and the variance of the conditional expectation of $\obs$ given $q$: 
\begin{equation}\label{law:total_variance_elementary}
 \vara [\obs] = \beaux \Big[ \vara[ \obs(\cdot,q)|q]\Big] + \vara^\mathcal{Q} \Big[ \eaux  \big( \obs(\cdot,q)| q \big) \Big]. 
\end{equation}  
This identity can be verified by expressing the involved variances as functions of their expectation:
\begin{eqnarray}\label{def:variance}
\vara\left[ \obs \right] &=& \eaux \left[\obs^2\right] - \eaux \left[\obs\right]^2, \\
\vara\left[ \obs | \rc \right] &=& \eaux \left[\left.\obs^2 \right| \rc \right] - \eaux \left[\obs| \rc \right]^2. \label{def:conditional_variance}
\end{eqnarray} 
We similarly define $\vara^\mathcal{Q}$, the variance with respect to probability $\bpaux$,  by replacing expectations $\eaux$ by expectations $\beaux$ in Eq.~\eqref{def:variance} and~\eqref{def:conditional_variance}. 
The law of total variance~\eqref{law:total_variance_elementary} will be useful in Sec.~\ref{section:conditioning} for proving variance reduction through conditioning of various estimators. However, estimators are used in practice to estimate conditional expectations of type $\e\left[ \obs|\lambda\right]$ which are defined with respect to the unbiased probability (reference distribution without biasing potential) whereas states are sampled from the biased probability distribution (with biasing potential switched on). This entails that the contribution of the sampled states must be weighted by including an adequate weighing function $g_\aux^\lambda$, so as to correct for the biasing potential.

\subsection{Estimating conditional expectations}

The generic estimator of the conditional expectation of observable $\obs(\lambda,\brc)$ given $\lambda$ will be based on the following generic weighing function 
\begin{equation}
\g_\aux^\lambda (\varlambda,\brc) =\frac{\paux(\lambda,\brc)}{\paux(\varlambda,\brc)}  K_{\lambda \varlambda} , 
\end{equation}
where the matrix $K_{\lambda \varlambda}$ satisfies the condition
\begin{equation}\label{def:kernel}
 \sum\nolimits_{\varlambda\in\varLambda} K_{\lambda \varlambda} =1, \quad \forall \lambda \in \varLambda. 
\end{equation}
The generic weighing function thus defines a large class of estimators which the conditioning approach will apply to. Owing to normalization~\eqref{def:kernel}, the expectation of $\g^\lambda_\aux$ is always equal to the marginal probability of the external parameter:  
\begin{equation}\label{scaling}
 \e_{\aux} \big[  \g_\aux^\lambda\big] = \sum_{\varlambda\in\varLambda}\int_ \mathcal{Q} \g_\aux^\lambda(\varlambda,\brc) \mathrm{p}_{\aux}(\varlambda,\brc)d\brc = \lpaux(\lambda). 
\end{equation}
Similarly, the following relation holds whatever the generic function: 
\begin{eqnarray*}
 \e_{\aux} \big[\g_\aux^\lambda (\cdot,\cdot) \obs(\lambda,\cdot) \big] & = & \int_\mathcal{Q} \obs(\lambda,\brc) \mathrm{p}_{\aux}(\lambda,\brc)d\brc \\ 
 & = & \e\big[\obs(\lambda,\cdot)\big| \lambda \big] \times \lpaux(\lambda). 
\end{eqnarray*}
This relation together with Eq.~\eqref{scaling} allows to express the conditional expectation of $\obs$ given $\lambda$ as a function of $\g_\aux^\lambda$
\begin{equation}\label{expectation:generic_form}
 \e\big[\obs\big| \lambda \big] = \frac{\eaux \left[ \g^\lambda_{\aux}(\cdot,\cdot) \obs(\lambda,\cdot) \right]}{\eaux \left[ \g^\lambda_{\aux} \right]}, 
\end{equation}
and to obtain the generic estimators by application of the ergodic theorem 
\begin{equation}\label{estimator:generic}
 \estimator{M}{G} (\obs |\lambda) = \frac{\tfrac{1}{M} \sum_{m=1}^M \g_\aux^\lambda(\varlambda_m,\brc_m)\obs(\lambda,\brc_m) }{\tfrac{1}{M} \sum_{m=1}^M \g_\aux^\lambda(\varlambda_m,\brc_m)},   
\end{equation}
where we considered a Markov chain $\left\{\varlambda_m,\brc_m \right\}_{1\leq m \leq M}$ ergodic with respect to $\paux(\varlambda,\brc)$. 

The generic weighing estimator generalizes the binning estimator~\cite{lyubartsev:1992}  and the standard reweighting estimator.~\cite{terrier:2015}
The binning estimator is obtained by setting the kernel matrix $K_{\lambda \varlambda}$ to the identity matrix. The function $g^\lambda_\aux(\varlambda,\brc) $ then reads 
\begin{equation}
h^\lambda_\aux(\varlambda,\brc) \triangleq \1_{\lambda} (\varlambda)
\end{equation} 
and we denote the corresponding estimator by $\estimator{M}{H}(\obs|\lambda)$. 
The standard reweighting estimator of $\e \left[ \obs| \lambda\right]$, denoted by $\estimator{M}{R}(\obs|\lambda)$, is obtained by setting the entries of the kernel matrix to the inverse of $\cardTheta = \sum_{\varlambda \in \varLambda}1$, the cardinal of the discrete set $\varLambda$. Unlike the binning estimator $\estimator{M}{H}(\obs|\lambda)$, the standard reweighting estimator $\estimator{M}{R}(\obs|\lambda)$ includes information from all the sampled subensembles $(\varlambda\in\varLambda)$ owing to the standard reweighting function, 
\begin{equation}
 \w_\aux^\lambda(\varlambda,\brc) = \frac{1}{\cardTheta} \dfrac{\exp\left[ \aux(\lambda)-\pot(\lambda,\brc)\right] } {\exp \left[\aux(\varlambda)-\pot(\varlambda,\brc) \right]}. 
\end{equation}
Thus, the factors $\w_\aux^\lambda(\varlambda_m,\brc_m)$ must be used in place of the generic weighing factors  $g_\aux^\lambda(\varlambda_m,\brc_m)$ in~\eqref{estimator:generic}. 
This standard reweighting function with $\cardTheta=1$ is commonly used is in non-Boltzmann sampling or free energy perturbation techniques~\cite{zwanzig:1954,torrie:1977,frenkel:2001,lelievre:2010} to rescale the contribution of states sampled with probability $\brc \mapsto \pi(\brc | \varlambda)$
with respect to target probability $\brc \mapsto \pi(\brc | \lambda)$ inside the estimator.
The forthcoming derivations involving the generic estimator will cover binning and standard reweighting as two particular remarkable cases. 

\subsection{Coupling protocols}

Within the expanded ensemble, we consider here that the external parameter couples either linearly to a potential energy of the internal coordinates or harmonically to a reaction coordinate $\xi(\rc)$. Note that nonlinear couplings between the external parameter and the potential energy can possibly be employed, for instance to achieve improved efficiency through functional minimization.~\cite{pham:2012}

With linear coupling, the extended system evolves between a reference system $S_0$ and a target system $S_1$. In practice, it is convenient to write the extended potential as 
\begin{equation} \label{def:alchemical}
 \pot (\lambda,\rc) =  (1-\lambda)\pot_0(\rc) + \lambda \pot_1(\rc), 
\end{equation}
where $\lambda$ ranges in the interval $[0,1]$. In~Eq.~\eqref{def:alchemical}, $\pot_0(\rc)$ and $\pot_1(\rc)$ are the potentials of the reference and target systems, respectively. 
This parameterization covers situations where the external parameter is associated with inverse temperature or pressure. For instance, when the reference and target systems correspond to a system of same potential energy $\mathpzc{V}(\rc)$ held at two distinct temperatures $\beta_\mathrm{min}$ and  $\beta_\mathrm{max}$, we simply set  $\pot_0 = \beta_\mathrm{min} \mathpzc{V}$ and $ \pot_1 = \beta_\mathrm{max} \mathpzc{V}$. 
An important task is to estimate conditional expectations given some values of the external parameter $\lambda$ [see~Eq.~\eqref{expectation:generic_form}] using an estimator described in subsection~\ref{estimators:conditional_expectation}. Estimating the average internal energy at a given inverse temperature is a typical example that will be illustrated in Section~\ref{sec:characterization} for icosahedral and octahedral atomic clusters. 
The present linear setup is also used in Section~\ref{sec:assessment} to illustrate the  estimation of total expectations, free energies and rare-event probabilities in a simple model using various estimators.

With harmonic coupling of the external parameter to a reaction coordinate $\xi: \rc \in \mathcal{Q} \mapsto \xi(\rc)$, the extended potential exhibits the following form 
\begin{equation}\label{eq:quadratic}
\pot(\lambda,\rc) = \pot_0(\rc) + \restrain\big(\lambda,\xi(\rc)\big),
\end{equation}
with $\pot_0(\rc) = \beta_{\mathrm{ref}}\mathcal{V}(\rc)$ wherein $\mathcal{V}(\rc)$ and $\beta_{\mathrm{ref}}$ are the potential and  the inverse temperature of the reference system, respectively. 
Here, the restraining potential $ \restrain\big(\lambda,\xi(\rc)\big)$ is harmonic and centered on the value of the reaction coordinate $\xi(\rc)$, denoted by  $\xis$ below
\begin{equation}
 \restrain\big(\lambda,\xis\big) = 
  \frac{1}{2}\beta_{\mathrm{ref}}\kappa \| \lambda - \xis\|^2 + \varepsilon(\xis), 
\end{equation}
where $\kappa$ is the spring stiffness and $\varepsilon(\xis)$ is a small correcting potential that is included to ensure that the restraint does not affect the reference distribution. We have 
\[\varepsilon(\xis) = \sum_{\varlambda\in\varLambda} \exp\left[- 
  \frac{1}{2}\beta_{\mathrm{ref}}\kappa \| \varlambda - \xis\|^2 \right].\]
The unbiased marginal probability reads 
\begin{equation}
\p^{\mathcal{Q}}_0(\rc)=\exp\big[-\pot_0(\rc) - \n_0 \big]. 
\end{equation}  

Consider now the observable $\2_\xis \circ \xi)$. Since it depends on the internal coordinates exclusively, its expectation in the unbiased expanded ensemble is equal to the value expected in the reference thermodynamic ensemble:  
\begin{equation}\label{expectation:total0}
 \e\big[\2_\xis \big] = \sum_{\varlambda\in\varLambda}\int_{\mathcal{Q}} \2_\xis\circ\xi(\brc) \mathrm{p}_0(\varlambda,\brc)d\brc = \int_\mathcal{Q} \2_\xis\circ\xi(\rc) \p^{\mathcal{Q}}_0(\rc)d\rc \triangleq \e^\mathcal{Q} \left[\2_\xis\circ\xi\right]. 
\end{equation}
The restraining potential itself, without the action of the auxiliary biasing potential, does not affect the thermodynamic expectations of observable $\2_\xis \circ \xi$. Subtracting an adequate biasing potential $\aux(\varlambda)$ to the extended system enables one to drive the external parameter that in turn will pull the particle system towards regions of interests. 
This setup is used when one wishes to mechanically restrain the average value of the reaction coordinate $\xi$ owing to $\aux(\varlambda)$. 
As a result, the biased probability density  $\mathrm{p}_\aux(\varlambda,\brc)$ is sampled, but expectations should be evaluated with respect to the unbiased probability density $\mathrm{p}_0(\varlambda,\brc)$. We show how this task can be efficiently done through conditioning in Sec.~\ref{application:lj38}. In the considered example, the $\xi(q)$ is a bond orientational order parameter~\cite{steinhardt:1983} able to describe structural transitions in a small atomic cluster. 

The present harmonic setting is used in extended ABF,~\cite{cao:2014} unified free energy dynamics~\cite{cuendet:2014} and extended Lagrangian metadynamics.~\cite{iannuzzi:2003,laio:2008} The last formulation of metadynamics differs from the (currently) most widely implemented formulation~\cite{laio:2002} in which a biasing potential similar to $\aux$ acts directly on the reaction coordinate, $\xi(\rc)$. 
A difficulty with metadynamics, like with any ABP algorithm, is that the updating rate of the biasing potential must be chosen adequately. If the rate rapidly converges to zero, then the biasing potential will evolve extremely slowly and likely not converge during the simulation. In contrast, if the rate slowly converges to zero, then the biasing potential will fluctuate for a long period prior to stabilizing. Finding a good trade-off between these two adverse situations requires judiciously tuning the updating parameters, a difficult task in general. ABF methods are (almost) free of such updating parameters. The advantage of extended ABF is that it is still applicable when the reaction coordinate is discrete or discontinuous. In contrast, standard ABF requires differentiating the reaction coordinate twice and can not be implemented to compute the free energy along the widely used bond orientational order parameters of Steinhardt et al.~\cite{steinhardt:1983} for instance. 

\section{Conditioning and variance reduction}
\label{section:conditioning}

Prior to showing how \emph{conditioning} is done within the generic estimator, let first describe a simple conditioning scheme and prove variance reduction in the estimation of the marginal probability of $\lambda$ .  

\subsection{Estimating the marginal probability of $\lambda$}
\label{section:estimate_marg_prob}

Let consider a sample $\left\{ \varlambda_m,\brc_m \right\}_{1 \leq m \leq M}$ drawn from the distribution of probability mass $\mathrm{p}_\aux(\varlambda,\brc)$ using a Metropolis  Monte Carlo algorithm or a Langevin process. The generic estimator for estimating  $\lpaux(\lambda)$ consists of evaluating the arithmetic mean of $\g_\aux$, denoted by 
\begin{equation}\label{estimator:standard_generic}
\iam \big(\g_\aux^\lambda\big) = \tfrac{1}{M} \sum_{m=1}^M \g_\aux^\lambda (\varlambda_m,\brc_m ),  
\end{equation}
where we applied the ergodic theorem to relation~\eqref{scaling}. 
Let also denote the conditional probability of $\lambda$ given $\rc$ by 
\begin{equation}
\pi^\lambda_\aux(\rc) = \frac{\exp\left[\aux(\lambda)-\pot(\lambda,\rc)\right]}{\sum_{\varlambda\in\varLambda} \exp\left[\aux(\varlambda)-\pot(\varlambda,\rc)\right]}. 
\end{equation}
The conditionally expected value of $\g_\aux^\lambda$ given $\rc$ is the conditional probability of $\lambda$ given $\rc$
\begin{equation} \label{relation:phy_pmarg}
  \eaux \big[ \g^\lambda_\aux (\cdot, \rc) \big| \rc \big] = \pi^\lambda_\aux(\rc). 
\end{equation}
Besides, the expected value of $\g_\aux^\lambda$ is the expected value of $\pi^\lambda_\aux$ (law of total expectation)
\begin{multline}\nonumber
 \eaux \big[ \g_\aux^\lambda  \big] = \beaux \big[ \eaux \big[ \g_\aux^\lambda (\cdot,\rc) \big| \rc \big] \big] \\= \beaux \big[  \pi_\aux^\lambda \big] = \eaux \big[ \pi_\aux^\lambda \big], 
\end{multline}
where we first resorted to Eq.~\eqref{def:marg_expectation} with $\obs(\cdot,\rc)$ set to $\eaux \big[ \g_\aux^\lambda (\cdot,\rc) \big| \rc \big]$. 
Interestingly, the last term in the sequence of equalities above means that for estimating $\eaux \big[ \g_\aux^\lambda \big]$, it is possible to replace $\g_\aux^\lambda (\varlambda_m,\brc_m )$ in Eq.~\eqref{estimator:standard_generic} by $\pi^\lambda_\aux(\brc_m )$. 
It is precisely this replacement scheme that is referred to as \emph{conditioning}. Estimating the marginal probability of $\lambda$ with the conditioning scheme therefore consists of evaluating the following quantity
\begin{equation}\label{estimator:marginal}
 \iam\big(\pi_\aux^\lambda\big) = \tfrac{1}{M} \sum_{m=1}^M \pi^\lambda_\aux(\brc_m). 
\end{equation}
The equality between expectations $\beaux\left[\pi_\aux^\lambda \right]$ and $\eaux\left[\pi_\aux^\lambda \right]$ indicates that the arithmetic estimator can still be employed using a configuration chain $Q^T=\left\{\brc_m \right \}_{1\leq m \leq M}$ distributed according to the marginal probability $\bpaux(\rc)$. Superscript $T$ stands for transposition so that $Q$ is a column stochastic vector. 

In general, the sampled configurations are \emph{identically} but \emph{not independently} distributed. This implies that the covariance matrix of $Q$ is non diagonal in general (see Lemma~\ref{lemma} for the definition of the covariance matrix of a random vector). 
Denoting the vector encompassing the sampled factors by  $G = \big\lbrace g_\aux^\lambda(\varlambda_m,\brc_m)\big\rbrace^T_{1\leq m \leq M}$ and the $M$-dimensional vector whose components are equal to $M^{-1}$ by $e_M$, the quantities $\iam(g_\aux^\lambda)$ and $\iam(\pi_\aux^\lambda)$ can be re-written as $e^T_MG$ and $\eaux \left(e^T_MG|Q\right)$, respectively. The statistical variances of estimators $\iam (g_\aux^\lambda)$ and $\iam (\pi_\aux^\lambda)$  are then 
\begin{eqnarray}
 \vara \left[\iam(g_\aux^\lambda)\right]   & = & \vara \left[ e_M^T G \right] \label{estimator:img}, \\
 \vara \left[\iam(\pi_\aux^\lambda)\right] & = & \vara^\mathcal{Q} \left[ \eaux \left(e_M^T G|Q\right) \right]\label{estimator:imp}, \end{eqnarray}
where the definition of $\vara$ is given in Eqs.~\ref{def:variance} and~\ref{def:conditional_variance}.  
The reduction of the variance through conditioning stems from the law of total variance~\eqref{law:total_variance_elementary}. Setting $\obs$ to $e^T_MG$, the law writes
\begin{equation}\label{law:total_variance_2}
 \vara [e_M^TG] = \beaux \Big[ \vara[ e_M^T G |Q]\Big] + \vara^\mathcal{Q} \Big[ \eaux  \big( e_M^TG \big| Q \big) \Big]. 
\end{equation}
We assume that the expected conditional variance is strictly positive 
\begin{align}\label{inequality:1}
 \beaux \Big[ \vara[ e_M^T G |Q]\Big] > 0.
\end{align}
The equality is reached when, for any given sample $Q$, $e_M^TG$ is constant, which is assumed to be never the case. Equality~\eqref{law:total_variance_2} and inequality~\eqref{inequality:1} in turn imply the following strict inequality between the variance of the two estimators in~\eqref{estimator:img} and~\eqref{estimator:imp}
\[
\vara \left[ \iam (g_\aux^\lambda) \right] \;  > \;  \vara \left[ \iam (\pi_\aux^\lambda) \right]. 
\]
Hence, the statistical variance associated with the arithmetic mean of the generic function is always larger than that associated with arithmetic mean of the conditional probabilities of $\lambda$, whatever the value of the biasing potential. It is therefore always preferable to use an estimator obtained through conditioning, provided that the overhead associated with the evaluation of the conditional expectation given the sampled states is small enough. 
The cost of conditioning becomes substantial in practice when the dimension of $\varLambda$ exceeds two or three. In this situation, the reduction of the variance may not be important enough to justify implementing a conditioning scheme. In the following, we always assume that performing the numerical quadrature integration within the conditioning scheme has a negligible cost compared to the one of evaluating the potential energy and its gradient. Assuming that the sampled configurations are \textit{identically} and \textit{independently distributed} (i.i.d) entails that the covariance matrix of $G$ is diagonal. The statistical variances of the considered estimators therefore take the following simple forms
\begin{eqnarray}
 \vara \left[\iam(g_\aux^\lambda)\right]   & = & \tfrac{1}{M}\vara \left[ g_\aux^\lambda \right], \\
 \vara \left[\iam(\pi_\aux^\lambda)\right] & = & \tfrac{1}{M}\vara \left[ \pi_\aux^\lambda \right]. \end{eqnarray}
The i.i.d. assumption is made from now so as to facilitate the comparison of the estimator variances in the asymptotic limit of large sample sizes. It will not modify the various inequalities which will be derived to compare the asymptotic variances of the generic and conditioned estimators. 

\subsection{Estimation of conditional expectations}
\label{estimators:conditional_expectation}

\subsubsection*{standard reweighting, binning and generic weighing}

Conditioning for estimating the conditional expectations of $\obs$ given $\lambda$ consists in substituting the conditional expectation given the sampled states for the sampled values of the generic function in the generic estimator of Eq.~\ref{estimator:generic}. Thus, substituting $\pi_\aux^\lambda(\brc_m)$ for $\g_\aux^\lambda ( \varlambda_m,\brc_m)$ yields the adiabatic reweighting estimator of $\e\left[\obs | \lambda \right]$: 
\begin{equation}\label{estimator:conditioning_conditional}
\estimator{M}{\Pi}(\obs|\lambda) = \frac{ \tfrac{1}{M} \sum_{m=1}^M  \pi^\lambda_{\aux}(\brc_m) \obs(\lambda,\brc_m)}{ \tfrac{1}{M} \sum_{m=1}^M \pi^\lambda_{\aux}(\brc_m)}, 
\end{equation}
where we used a Markov chain $\left\{\varlambda_m,\brc_m \right\}_{1\leq m \leq M}$ distributed according to $\paux(\varlambda,\brc)$, or a configuration chain $\left\{\brc_m \right \}_{1\leq m \leq M}$ distributed according to the marginal $\bpaux(\brc)$. 
The substitution that is done amounts to plugging the law of total expectation both in the numerator and the denominator of Eq.~\eqref{expectation:generic_form},
\begin{multline} \label{total_expectation_ratio_conditioning}
 \e\left[ \obs | \lambda \right] = \frac{\beaux \left[ \eaux  \left[\g_{\aux}^\lambda (\cdot,\rc) \obs(\lambda,\rc) \big| \rc \right] \right]}{ \beaux \left[\eaux  \left[\g_{\aux}^\lambda (\cdot,\rc)\big| \rc \right] \right] } \\= \frac{\eaux \left[ {\pi}_{\aux}^\lambda(\cdot) \obs(\lambda,\cdot) \right]}{\eaux \left[ {\pi}_{\aux}^\lambda \right]}. 
\end{multline}
The AR estimator can be directly deduced from Bayes formula as shown in Ref.~\citenum{cao:2014}~and~\citenum{terrier:2015}. Bayes formula is recovered here from Eq.~\eqref{total_expectation_ratio_conditioning} by substituting $\delta(\rc^\star-\rc)$ for the observable and replacing $q^\star$ by $q$
\begin{equation}
 \pi(\rc|\lambda)= \frac{\pi^\lambda_\aux(\rc)\bpaux(\rc)}{\lpaux(\lambda)}. 
\end{equation}
The AR estimator can alternatively be viewed as an instance of waste-recycling Monte Carlo~\cite{frenkel:2004,delmas:2009} when the $\varlambda_m$ are sampled directly from the conditional probabilities $\varlambda \mapsto \pi_\aux^\varlambda(\brc_m)$ at each given $\brc_m$ as suggested in Ref.~\citenum{lidmar:2012}: the wasted information relative to rejected trial moves for $\varlambda\in\varLambda$ is recycled in the estimator. However, resorting to such a Gibbs sampler~\cite{chodera:2011} is not a necessary prescription and any sampler satisfying the detailed balance condition can possibly  be used. In fact, adiabatic reweighting amounts to performing virtual Monte Carlo moves and can be viewed as a particular instance of the virtual-move Monte Carlo method propopsed in Ref.~\citenum{frenkel:2006}.

The present conditioning scheme entails variance reduction in the asymptotic limit of large sample sizes. To prove this property, we will compare the asymptotic variance of the adiabatic reweighting estimator (Eq.~\eqref{estimator:conditioning_conditional}) to that of the generic estimator (Eq.~\eqref{estimator:generic}).  
The present situation differs from that of subsection~\ref{section:estimate_marg_prob} wherein reduction was guaranteed for any sample sizes. The difficulty is due to the presence of a denominator in~Eqs.~\eqref{estimator:generic} and~\eqref{estimator:conditioning_conditional}. 
Let us assume that the function $\rc \mapsto \obs(\lambda,\rc)$ is non-constant (otherwise sampling would not be necessary) and introduce the centered observable $\obs^\lambda(\rc) = \obs(\lambda,\rc) - \e[\obs |\lambda]$, a quantity centered on the value of the conditional expectation given $\lambda$. Then, the quantity $\obs^\lambda \g_\aux^\lambda$ is centered with respect to the total expectation. We have
 \begin{multline}\nonumber
 \e_{\aux} \big[  \g_\aux^\lambda\obs^\lambda\big] = \beaux \Big[ \eaux \big[ \g_\aux^\lambda \left(\cdot,\rc\right) \obs^\lambda(\rc) | \rc \big] \Big]  \\= \beaux \big[ \pi^\lambda_\aux(\cdot)\obs (\lambda,\cdot)  \big] -  \e(\obs|\lambda)\lpaux(\lambda) = 0 . 
\end{multline}
Let us now assume that the generated Markov chains $\left\{\varlambda_m,\brc_m \right\}_{1 \leq m \leq M}$  consist of a sequences of random variables that are i.i.d. according to $\mathrm{p}_{\aux}(\varlambda,\brc)$. Then, the variance of the arithmetic mean of $\g_\aux^\lambda(\varlambda_m,\brc_m)\obs^\lambda(\brc_m)$ multiplied by $M$ decomposes into the variance of $\g_\aux^\lambda(\varlambda,\brc)\obs^\lambda(\brc)$: 
\begin{equation} \label{eq:variance}
M \vara \left[\tfrac{1}{M} \sum_{m=1}^M  \g_\aux^\lambda(\varlambda_m,\brc_m)\obs^\lambda(\brc_m)\right] = \vara \big[\g_\aux^\lambda\obs^\lambda\big]. 
\end{equation} 
In the limit of large sample sizes, the variance of the $\sqrt{M} \estimator{M}{G}(\obs|\lambda)$ quantity becomes equivalent to the following variance 
\begin{equation}\label{asymptotic_equivalence}
M \vara \left[ \estimator{M}{G}\left(\obs \big|\lambda\right) \right]
\underset{M \rightarrow + \infty}{\sim}
\vara \left[ \frac{\g_\aux^\lambda\obs^\lambda}{ \eaux \big[\g_\aux^\lambda \big] } 
 \right].  
\end{equation}
The limit of the left-hand side term of Eq.~\eqref{asymptotic_equivalence} as $M$ tends to infinity is called the asymptotic variance of the $\estimator{M}{G}( \obs|\lambda)$ estimator.
The square-root of the asymptotic variance corresponds to the asymptotic standard error and writes
\begin{equation}\label{asymptotic_error_phi}
 \sigmaux\left[ \estimator{\infty}{G} (\obs|\lambda) \right] = \frac{1}{\lpaux(\lambda)} \sqrt{ \vara \big[\g_\aux^\lambda\obs^\lambda\big]},  
\end{equation}
where we have substituted $\lpaux(\lambda)$ for $\eaux (\g_\aux^\lambda)$. 
This mathematical result is a consequence of the delta method (see Appendix~\ref{appendix:proof}). More precisely, the following convergence in law holds
\begin{equation}\nonumber
  \frac{\tfrac{1}{M} \sum_{m=1}^M  \g_\aux^\lambda(\varlambda_m,\brc_m)\obs (\brc_m)}{\tfrac{1}{M} \sum_{m=1}^M \g_\aux^\lambda(\varlambda_m,\brc_m)} \convlaw \mathcal{N}\left(\e \left[\obs|\lambda\right]  , \sigmaux^2\left[ \estimator{\infty}{G} (\obs|\lambda) \right]  \right), 
\end{equation}
where $\mathcal{N}(\mu,\varsigma)$ denotes the normal law of mean $\mu$ and variance $\varsigma$. 
Similarly, the asymptotic error of the adiabatic reweighting estimator is, with i.i.d. assumption again,
\begin{equation}\label{asymptotic_error_pi}
 \sigmaux \left[\estimator{\infty}{\Pi}(\obs|\lambda) \right] = \frac{1}{\lpaux(\lambda)} \sqrt{\vara \big[ {\pi}_{\aux}^\lambda\obs^\lambda\big]}. 
\end{equation}

To compare the two asymptotic errors, we resort to the law of total variance as in subsection~\ref{section:estimate_marg_prob}, but with respect to $\obs^\lambda \g^\lambda_\aux$ quantity in place of $\g^\lambda_\aux$. The law states that the total variance is equal to the sum of the expectation of the conditional variances given $\rc$ and the variance of the conditional expected values given $\rc$: 
\begin{multline}\label{law:total_variance}\nonumber
 \vara \big(\g_\aux^\lambda\obs^\lambda\big) = \beaux \left[\vara\big( \g_\aux^\lambda(\cdot,q) \obs^\lambda (\rc) \big|\rc\big)\right] + \\ \vara^\mathcal{Q} \left[ \eaux  \big(  \g_\aux^\lambda(\cdot,\rc) \obs^\lambda(\rc) | \rc \big) \right]. 
\end{multline}
Plugging~Eq.~\eqref{relation:phy_pmarg} into the law of total variance leads to 
\begin{equation}\label{inequality:var}
\vara \left[ \pi_{\aux}^\lambda  \obs^\lambda \right] = \vara \left[\g_\aux^\lambda\obs^\lambda\right] - \beaux \left[  \vara\big[ \g_\aux^\lambda \obs^\lambda |\rc \big] \right]. 
\end{equation}
The function $\rc \mapsto \obs^\lambda(\rc)$ being non-constant and the conditional variance of $\g_\aux^\lambda$ being strictly positive for all $\rc$, the last expectation above is strictly positive. Thus, $\vara \big( \pi_{\aux}^\lambda \obs^\lambda\big)$ is strictly lower than  $\vara \big(\g_\aux^\lambda\obs^\lambda\big)$. 
From identities~\eqref{asymptotic_error_phi}, ~\eqref{asymptotic_error_pi} and~\eqref{inequality:var}, we deduce the following strict inequality between the asymptotic errors of the estimators  
\begin{equation}\label{inequality:estimator}
 \sigmaux\left[ \estimator{\infty}{\Pi}(\obs|\lambda) \right]  < \sigmaux \left[ \estimator{\infty}{G}(\obs|\lambda) \right]. 
\end{equation}
It is therefore always preferable to use the adiabatic reweighting estimator rather than the binning, standard reweighting or generic estimators. We now go on by discussing the relevance of implementing self-consistent reweighting estimators in combination with expanded ensemble simulations.

\subsubsection*{Self-consistent reweighting}
\label{section:well-conditioning}

To estimate conditional expectations in expanded ensemble simulations with high accuracy, it has been suggested by Chodera \emph{et al.}~\cite{chodera:2007} to implement the self-consistent reweighting estimator WHAM.  
The original WHAM approach consists of performing a number of independent simulations for each $\lambda\in\varLambda$ and pooling the generated samples into a single sample whose total size is $M=\sum_{\lambda\in\varLambda}M_\lambda$. The $M_\lambda$ quantities are the sizes of the $\lambda$-samples, the original samples generated at constant $\lambda$ using the independent simulations. These sizes are crucial input parameters in self-consistent reweighting estimators together with the pooled sample. Since a single simulation is performed in the expanded ensemble, it has been proposed to estimate the $\lambda$-sample sizes through standard histogram binning (see Eq.~\ref{estimator:standard_generic})
\[M_\lambda=\sum_{m=1}^M \1_{\lambda}(\varlambda_m) = M\iam\left(\1_{\lambda}\right),\quad \lambda\in\varLambda .\] 
The collected data are then postprocessed using WHAM. The overall procedure~\cite{chodera:2007} is referred to as \textit{simulated tempering WHAM} (STWHAM). Resorting to the bin-less formulation~\cite{shirts:2008,tan:2012} of WHAM, the self-consistent reweighting estimator of $\e\left[\obs|\lambda\right]$ can be cast into the following form 
\begin{equation}\label{estimator:mbar}
\estimator{M}{SC}(\obs | \lambda) = \sum\limits_{m=1}^M  \frac{\obs(\lambda,\brc_m)\exp[ \widehat{\pmf}(\lambda) -\pot(\lambda,\brc_m) ]}{\sum_{\varlambda\in\varLambda} M_\varlambda \exp [\widehat{\pmf}(\varlambda) - \pot(\varlambda,\brc_m) ]}, 
\end{equation}
where the quantities $\widehat{\pmf}(\varlambda)$ for $\varlambda\in\varLambda$ are the estimated free energies. They are given by the solutions (up to a common constant) to the following set of nonlinear equations: 
\begin{equation} \label{system:SC}
\estimator{M}{SC}(1 | \lambda) = 1,\qquad \lambda \in \varLambda.
\end{equation} 
System~\eqref{system:SC} is equivalent to 
\begin{equation} \label{system:SC2}
\iam\left(\pi^\lambda_{\widehat{\aux}}\right) = \iam\left( \1_\lambda \right) ,\qquad  \lambda \in \varLambda.
\end{equation}
where the unknown function $\widehat{\aux}(\lambda)$ stands for the quantity $\widehat{\pmf}(\lambda)+\ln M_\lambda$ in Eq~\eqref{estimator:mbar}. 
Since the $\lambda$-sample sizes $M_\lambda$ are stochastic quantities in the STWHAM estimator~\eqref{estimator:mbar}, it is legitimate to do some conditioning on them and to use as input data
\begin{equation} \label{eep}
\widehat{M}_\lambda = \sum\limits_{m=1}^M \frac{\exp \left[ {\aux(\lambda)- \pot (\lambda ,\brc_m)} \right]} {\sum_{\varlambda \in \varLambda} \exp\left[{\aux(\varlambda)-\pot(\varlambda, \brc_m)} \right]}= M\iam\left(\pi_\aux^\lambda\right), 
\end{equation}
instead of $M_\lambda$  for $\lambda  \in \varLambda$. We show in Appendix~\ref{appendix:STWHAM} that conditioning the self-consistent reweighting estimator further reduces its asymptotic statistical variance. 
Besides, conditioning amounts to substituting $\iam\left(\pi^\lambda_{\aux}\right)$ for $\iam\left( \1_\lambda \right)$ in system~\eqref{system:SC2}. 
This entails that the self-consistent solutions $\widehat{\aux}(\lambda)$ become equal to $\aux(\lambda)$ (up to a common additive constant). 
The conditioned self-consistent reweighting estimator is therefore obtained by substituting
$\aux(\lambda) - \ln \widehat{M}_\lambda$ for $\widehat{\pmf}(\lambda)$ and $\widehat{M}_\lambda$ for $M_\lambda$ in the unconditioned self-consistent reweighting estimator~\eqref{estimator:mbar}. These two substitutions enable one to recover the AR estimator~\eqref{estimator:conditioning_conditional} as shown below: 
\begin{equation}\label{estimator:adiabatic_mbar}\nonumber
\begin{aligned}
\estimator{\widehat{M}}{SC}(\obs|\lambda) 
& =  \sum\limits_{m=1}^M  \frac{\obs(\lambda,\brc_m) \exp \left[ {\aux(\lambda)-\ln\widehat{M}_\lambda- \pot (\lambda ,\brc_m)} \right]} {\sum_{\varlambda\in\varLambda} \widehat{M}_\varlambda \exp\left[{\aux(\varlambda)-\ln\widehat{M}_\varlambda-\pot(\varlambda , \brc_m)} \right]}\\[0.25cm] &=
\dfrac{ \tfrac{1}{M} \sum\limits_{m=1}^M  \obs(\lambda,\brc_m) \dfrac{\exp \left[ {\aux(\lambda)- \pot (\lambda ,\brc_m)} \right]} {\sum_{\varlambda\in\varLambda} \exp\left[{\aux(\varlambda)-\pot(\varlambda , \brc_m)} \right]}}{\tfrac{1}{M} \sum\limits_{m=1}^M \dfrac{\exp \left[ {\aux(\lambda)- \pot (\lambda ,\brc_m)} \right]} {\sum_{\varlambda\in\varLambda} \exp\left[{\aux(\varlambda)-\pot(\varlambda , \brc_m)} \right]} } \\[0.25cm] 
&= \dfrac{ \tfrac{1}{M} \sum_{m=1}^M   \pi_\aux^\lambda ( \brc_m) \obs(\lambda,\brc_m) }{\tfrac{1}{M} \sum_{m=1}^M \pi_\aux^\lambda ( \brc_m)} = \estimator{M}{\Pi}(\obs|\lambda).& 
\end{aligned}
\end{equation} 

To summarize, the use of the conditioned $\lambda$-sample sizes in STWHAM provides a reduction of the statistical variance (see Appendix~\ref{appendix:STWHAM}), and, concomitantly, the task of solving a large set of nonlinear equations is avoided because the   self-consistent reweighting with conditioning amount to adiabatic reweighting. More generally, \emph{conditioning} the expectations associated with the binning, standard reweighting and self-consistent reweighting estimators reduces the statistical variances in the analysis of \emph{expanded ensemble simulations} and leads to the formulation of the identical adiabatic reweighting estimator. Adiabatic reweighting is thus asymptotically optimal among the large class of considered estimators. 

\subsection{Estimation of total expectations}
\label{estimators:total_expectation}

In addition to conditional expectations, expanded ensemble simulations also aim at estimating the total expectations whose general form is given in Eq.~\ref{expectation:total}, for instance in order to construct histograms of occupation probabilities along an external parameter as in Eq.~\ref{def:histogram_lambda} or along an internal reaction coordinate as in Eq.~\ref{expectation:total0}. In both situations, the co-logarithm of the occupation probabilities gives access to the free energy along the external parameter (see Eq.~\ref{def:fe} of Sec.~\ref{section:expectations}) or along the internal coordinate (see Eq.~\ref{def:fexi} of Sec.~\ref{application:lj38}). The questions then arise as to (i) how to transpose the generic estimator, (ii) how to condition the transposed estimator and (iii) whether conditioning achieves variance reduction. 

To answer question (i), we decompose the total expectation of $\obs$ resorting to the law of total expectation~\eqref{law:total_expectation_lambda} with respect to $\lambda$ and plug the expectation ratio~\eqref{expectation:generic_form} that involves the weighing function $\g_\aux^\lambda$. This yields 
\begin{equation}\label{expectation:decomposition}
 \e\left[\obs \right] = \sum_{\lambda\in\varLambda} \frac{\eaux \left[ \g^\lambda_{\aux}(\cdot,\cdot) \obs(\lambda,\cdot) \right]}{\eaux \left[ \g^\lambda_{\aux} \right]} \lpo(\lambda),  
\end{equation} 
The unbiased marginal probability of $\lambda$ that appears in the right-hand side (rhs) term is now expressed as a function of the marginal probability of $\lambda$ with the biasing potential switched on
\begin{equation}\label{margprobtheta}\nonumber
 \lpo (\lambda) = \frac{e^{-\aux(\lambda)} \lpaux (\lambda) }{\sum_{\lambdag\in\varLambda} e^{-\aux(\lambdag)}\lpaux(\lambdag) } \\= \dfrac{ e^{-\aux(\lambda)}\eaux \left[\g_\aux^\lambda \right]}{ \eaux \left[ \sum_{\lambdag\in\varLambda} e^{-\aux(\lambdag)  }  \g^{\lambdag}_{\aux} \right]}. 
\end{equation}
Inserting the last term above into the rhs term of Eq.~\ref{expectation:decomposition} and eventually permuting the expectation $\eaux$ and the sum $\sum_{\lambda\in\varLambda}$ yields 
\begin{equation}\label{expectation:unconditional}
 \e\left[\obs \right] =  \frac{\eaux \left[ \sum_{\lambda\in\varLambda}\obs(\lambda,\cdot) e^{-\aux(\lambda)} \g_\aux^{\lambda} (\cdot,\cdot) \right]}{ \eaux \left[ \sum_{\lambda\in\varLambda} e^{-\aux(\lambda)}  \g^{\lambda}_{\aux}\right]}. 
\end{equation}
To manipulate such total expectations, it is more convenient to multiply the previously employed weighing functions ($\1_\lambda$, $\w_\aux^\lambda$, $\g_\aux^\lambda$ and $\pi_\aux^\lambda$) by $\exp[-\aux(\lambda)]$. The modified weighing functions are denoted by $\dsh_\aux^\lambda$, $\dsw_\aux^\lambda$, $\dsg_\aux^\lambda$ and $\dspi_\aux^\lambda$, respectively. The employed notations with their definitions are compiled in Table~\ref{table:weighing}. 
Inserting the functions $\dsg_\aux^{\lambda}(\varlambda,\brc)=e^{-\aux(\lambda)}\g^{\lambda}_\aux (\varlambda,\brc)$ and 
$ \dsg_\aux(\varlambda,\brc) = \sum_{\lambda \in\varLambda } \dsg_\aux^{\lambda}(\varlambda,\brc)$ into Eq.~\ref{expectation:unconditional} yields
\begin{equation}\label{expectation:total2}
 \e\left[\obs\right] =  \frac{\eaux \left[ \sum_{\lambda\in\varLambda}\obs(\lambda,\cdot) \dsg_\aux^{\lambda} (\cdot,\cdot) \right]}{ \eaux \left[ \dsg_{\aux} \right]}. 
\end{equation}
We are now in a position to formulate the estimator of the total expectation based on relation~\eqref{expectation:total2}: 
\begin{equation}~\label{estimator:generic_total}
 \estimator{M}{G} (\obs) =\frac{\tfrac{1}{M} \sum_{m=1}^M \sum_{\lambda\in\varLambda}\obs(\lambda,\brc_m) \dsg_\aux^{\lambda}(\varlambda_m,\brc_m)}{\tfrac{1}{M} \sum_{m=1}^M \dsg_\aux(\varlambda_m,\brc_m)},   
\end{equation}
where $\left\{\varlambda_m,\brc_m\right\}_{1 \le m \le M} $ is a Markov chain generated according to the probability distribution $\mathrm{p}_\aux (\varlambda,\brc)$. 

We have in particular for the binning estimator of $\e\left[ \obs \right]$
\[
\estimator{M}{H} (\obs) =  \frac{\tfrac{1}{M} \sum_{m=1}^M \obs(\varlambda_m,\brc_m) \exp[-\aux(\varlambda_m)] }{\tfrac{1}{M} \sum_{m=1}^M \exp[-\aux(\varlambda_m)]},
\]
where we replaced $\dsg^\lambda_\aux $ and  $\dsg_\aux $ in Eq.~\eqref{expectation:total2} by 
$\dsh^\lambda_\aux$ and $\dsh_\aux$. With \textit{harmonic coupling}, the standard reweighting estimator of $\e\left[ \2_\xis \right]$ writes
\[
\estimator{M}{R}(\2_\xis) = \dfrac{\tfrac{1}{M} \sum\nolimits_{m=1}^M \2_\xis(\xi_m) \exp\left[ -\aux(\varlambda_m) + \restrain (\varlambda_m, \xi_m)\right] }{\tfrac{1}{M} \sum\nolimits_{m=1}^M \exp\left[ -\aux(\varlambda_m) + \restrain (\varlambda_m, \xi_m)\right]}, 
\]
where we replaced $\dsg^\lambda_\aux $ and  $\dsg_\aux $ in Eq.~\eqref{expectation:total2} by
$\dsw^{\lambda}_\aux$ and $\dsw_\aux$ and $\xi_m$ stands for $\xi(\brc_m)$. 
This relation shows that it is in principle possible to remove the biasing and harmonic potentials simultaneously. However, this way of proceeding is not efficient. Instead, integration of the averaged restraint $\eaux\left[ \partial_\xis \restrain(\cdot,\xis) | \xis \right]$ along $\xis$ is preferred in practice. We refer to the discussion of CZAR method in Sec.~\ref{application:lj38}. 

With regards to question (ii), conditioning consists in replacing the $\dsg^{\lambda}_\aux(\varlambda_m,\brc_m)$ and $\dsg_\aux(\varlambda_m,\brc_m)$ terms by their expected values given $\brc_m$, which are respectively ($\rc \equiv \brc_m$)
\begin{eqnarray}\label{exp_raux} \nonumber
\eaux \left[\dsg^{\lambda}_\aux(\cdot,\rc) | \rc \right]  &=&\exp\left[- \aux(\lambda)\right]\eaux \left[\g^{\lambda}_\aux(\cdot,\rc) \big| \rc \right]= \dspi_\aux^{\lambda}(\rc),\nonumber \end{eqnarray}
and 
\begin{eqnarray}\nonumber
 \eaux \left[ \dsg_\aux(\cdot,\rc) | \rc \right] &=& \sum_{\lambda\in\varLambda} \dspi_\aux^{\lambda}(\rc) = \dspi_\aux(\rc).  \nonumber
\end{eqnarray} 
Next, we write the law of total expectation in the rhs ratio of Eq.~\ref{expectation:unconditional} and plug the expected value of $\dsg_\aux(\varlambda,\rc)$ given $\rc$ 
\begin{multline}\label{expectation:unconditional_bf}
  \e\left[\obs\right] = \frac{\beaux \Big[\eaux \big[ \sum_{\lambda\in\varLambda} \dsg^\lambda_\aux(\cdot,\rc)\obs(\lambda,\rc) \big| \rc \big] \Big]}{ \beaux \Big[ \eaux \big[ \dsg_\aux(\cdot,\rc) \big| \rc \big] \Big]} \\= \frac{\eaux \Big[\sum_{\lambda\in\varLambda} \dspi_\aux^{\lambda}(\cdot)\obs(\lambda,\cdot) \Big]}{ \eaux \Big[ \dspi_\aux\Big]}.
\end{multline}
From the rhs expectation ratio of Eq.~\ref{expectation:unconditional_bf} and by application of the ergodic theorem, the adiabatic reweighting estimator of $\e[\obs]$ below is deduced: 
\begin{equation}\label{estimator:total_general}
 \estimator{M}{\Pi}(\obs) =   \frac{\tfrac{1}{M} \sum_{m=1}^M\sum_{\lambda\in\varLambda} \dspi_\aux^{\lambda}(\brc_m)\obs(\lambda,\brc_m) }{\tfrac{1}{M} \sum_{m=1}^M \dspi_\aux(\brc_m)}, 
\end{equation}
where $\left\{\brc_m \right\}_{1 \leq m \leq M}$ is a Markov chain of states distributed according to probability distribution $\bpaux(\rc)$. This one can possibly be extracted from an expanded Markov chain $\left\{\varlambda_m,\brc_m \right\}_{1 \leq m \leq M}$ generated according to $\mathrm{p}_\aux(\varlambda,\brc)$ probability density. 
We have compiled in Table~\ref{table:weighing} the relations between total expectations and weighing functions which are useful for the construction of the estimators. 

We answer question (iii) in the affirmative: conditioning for estimating total expectations achieves variance reduction. As for estimations of conditional expectations in section~\ref{section:conditioning}, this property is a consequence of the law of total variance. The proof follows the same reasoning, but requires replacing the conditionally centered variable $\obs^\lambda(\rc) = \obs(\lambda,\rc)-\e\left[\obs|\lambda\right]$ by the totally centered variable 
\begin{equation}
\obs^\varLambda(\varlambda,\brc) = \obs(\varlambda,\brc)-\e\left[\obs\right], 
\end{equation} 
and the weighing factor $\g_\aux^\lambda$ by a sum over $\lambda\in\varLambda$ involving the $\dsg_\aux^{\lambda}$ factors. 
The asymptotic variance of the $\estimator{M}{R}(\obs)$ estimator writes (see Appendix~\ref{appendix:proof})
\begin{align} \label{variance_total_expectation}
\sigmaux^2 \left[\estimator{\infty}{G} (\obs) \right] & = \vara \left[ \sum_{\lambda\in\varLambda} \obs^\varLambda(\lambda,\cdot) \dfrac{\dsg^{\lambda}_\aux(\cdot,\cdot) }{\eaux(\dsg_\aux)}\right], 
\end{align} 
where the quantity inside the variance is also centered. 
The asymptotic variance of the $\estimator{M}{\Pi}(\obs)$ estimator is obtained from the one of the $\estimator{M}{\Pi}(\obs|\lambda)$ estimator by replacing the conditional probabilities $\pi_\aux^{\lambda}(\rc)$ by sums involving $\dspi_\aux^{\lambda}(\rc)$ over $\lambda \in \varLambda$. One obtains
\begin{multline}\nonumber
\sigmaux^2 \left[ \estimator{\infty}{\Pi} (\obs)\right]  = \vara \left[ \sum_{\lambda\in\varLambda} \obs^\varLambda(\lambda,\cdot)\dfrac{ \dspi_\aux^{\lambda}(\cdot)}{\eaux(\dspi_\aux)} \right] \\ = \vara^\mathcal{Q} \left[ \eaux\left[\sum_{\lambda\in\varLambda} \obs^\varLambda(\lambda,\rc) \dfrac{\dsg_\aux^{\lambda}(\cdot,\rc)}{\eaux(\dsg_\aux)} \bigg|\rc\right] \right], 
\end{multline}
where $\dspi_\aux=\sum_{\lambda\in\varLambda}\dspi_\aux^\lambda$. 
Plugging the law of total variance into the right-hand side variance enables one to conclude that the asymptotic variance of the $\estimator{M}{\Pi}(\obs)$ estimator is smaller than that of the $\estimator{M}{G}(\obs)$ estimator  
\begin{multline}
\sigmaux^2 \left[\estimator{\infty}{\Pi}(\obs)\right] = \vara \left[ \sum_{\lambda\in\varLambda} \obs^\varLambda(\lambda,\cdot) \dfrac{\dsg_\aux^{\lambda}(\cdot,\cdot) }{\eaux(\dsg_\aux)}\right] \\- \beaux \left[ \vara \left[ \sum_{\lambda\in\varLambda} \obs^\varLambda(\lambda,\rc) \dfrac{\dsg_\aux^{\lambda}(\cdot,\rc)}{\eaux(\dsg_\aux)} \bigg|\rc \right] \right] \nonumber  < \sigmaux^2 \left[\estimator {\infty}{G}(\obs)\right]. 
\end{multline}

We will illustrate the estimation of total expectations by setting the observable to the indicator functions $\1_\lambda(\varlambda)$ or $\2_\xis\left(\xi(\brc)\right)$ where $\xi(\brc)$ is an internal reaction coordinate. 
The co-logarithms of the expected values of the two indicator functions yield the free energies along the external and internal coordinate, respectively. 
Various ways of estimating free energies, the primary goal of expanded ensemble simulations, are discussed next in subsection~\ref{estimation_free_energy} and also later in  Sec.~\ref{application:lj38}, for the external and internal coordinate cases, respectively. 

\begin{table}[h]
\caption{Notations and definitions of weighing functions and relations to conditional and total expectations.}
\label{table:weighing}
\scalebox{0.95}{$
\begin{array}{|c|}
\hline\\[-0.4cm]
\mbox{Definitions}\\ \hline\\[-0.1cm]
\begin{array}{lclcll}
 \dsh^\lambda_\aux(\varlambda,\brc) &=& \exp\left[-\aux(\lambda) \right] h^\lambda_\aux(\varlambda,\brc) &=& \exp\left[ -\aux(\lambda) \right] \1_\lambda(\varlambda) & \qquad\dsh_\aux =\sum_{\lambda \in \varLambda}\mathbbm{h}_\aux^\lambda,  \\[0.5cm]
 \dsw^\lambda_\aux(\varlambda,\brc) &=& \exp\left[-\aux(\lambda) \right]\w^\lambda_\aux(\varlambda,\brc)  &=& \dfrac{1}{ \cardTheta }\dfrac{\exp\left[- \pot (\lambda,\brc)\right]}{\exp\left[\aux(\varlambda,\brc)-\pot(\varlambda,\brc) \right]} &\qquad  \dsw_\aux =\sum_{\lambda \in \varLambda}\dsw_\aux^\lambda,  \\[0.75cm]
\dsg_\aux ^\lambda (\varlambda,\brc) &=& \exp[-\aux(\lambda)] \g_\aux(\varlambda,\brc)  & = & \begin{cases} \dsh^\lambda_\aux(\varlambda,\brc) & \mbox{if binning,}\\
\dsw^\lambda_\aux(\varlambda,\brc) & \mbox{if standard reweighting,}
\end{cases}&\qquad \dsg_\aux =\sum_{\lambda \in \varLambda}\dsg_\aux^\lambda,	\\[0.75cm]
\dspi^\lambda_\aux(\brc) &=& \exp\left[ -\aux(\lambda) \right] \pi_\aux^\lambda(\brc) &=&  \dfrac{\exp\left[- \pot (\lambda,\brc)\right]}{\sum_{\varlambda \in \varLambda}\exp\left[\aux(\varlambda,\brc)-\pot(\varlambda,\brc) \right]}, &\qquad\dspi_\aux = \sum_{\lambda \in \varLambda}\dspi_\aux^\lambda . \\[0.75cm]
\end{array}\\
\hline\\[-0.6cm]
\mbox{Expressions for total expectations}\\ \hline\\[-0.1cm]
\begin{array}{ccccccc}
\e \left[ \obs\right] &=& \dfrac{\eaux\left[\sum_{\lambda \in \varLambda}  \dsg^{\lambda}_\aux(\cdot,\cdot) \obs(\lambda,\cdot) \right]}{\eaux\left[ \dsg_\aux \right]} 
&\Longrightarrow& 
\e\left[\1_\lambda \right] &=& \dfrac{\eaux\left[ \dsg^{\lambda}_\aux \right]}{\eaux\left[ \dsg_\aux \right]}\\[0.5cm]
\e \left[ \obs\right] &=& \dfrac{\eaux\left[\sum_{\lambda \in \varLambda}  \dspi^{\lambda}_\aux(\cdot) \obs(\lambda,\cdot) \right]}{\eaux\left[ \dspi_\aux \right]} 
&\quad\Longrightarrow\quad& 
\e\left[\1_\lambda\right] &=& \dfrac{\eaux\left[ \dspi^{\lambda}_\aux \right]}{\eaux\left[ \dspi_\aux \right]}\\[0.5cm]
\end{array}\\[1.0cm]\hline\\[-0.4cm]
\begin{array}{c}
\mbox{Expressions for conditional expectations}
\end{array}\\ \hline \\[-0.2cm]
\begin{array}{rclcrcl}
 \eaux \left[ \dsg^\lambda_\aux(\cdot,\rc)\big|\rc \right] &=& \dspi^\lambda_\aux(\rc)
&\quad\Longleftarrow\quad&
\eaux \left[ \g^\lambda_\aux(\cdot,\rc)\big|\rc \right] &=& \pi^\lambda_\aux(\rc) 
, \\[0.4cm]
  
\e \left[ \partial_\lambda \pot (\lambda,\cdot) |\lambda \right] &=& - \partial^\lambda \ln \eaux \left[ \dsw^\lambda_\aux\right] &\quad\Longleftarrow\quad& \partial^\lambda\dsw^\lambda_\aux(\cdot,\cdot) &=& - \partial_\lambda\pot(\lambda,\cdot) \dsw^\lambda_\aux(\cdot,\cdot), \\[0.4cm]

\e \left[ \partial_\lambda \pot (\lambda,\cdot) |\lambda \right] &=& - \partial^\lambda \ln \eaux \left[ \dspi^\lambda_\aux \right] 
&\quad\Longleftarrow\quad&
\partial^\lambda\dspi^\lambda_\aux(\cdot) & = & - \partial_\lambda\pot(\lambda,\cdot) \dspi^\lambda_\aux(\cdot)
. \\[0.4cm]
\end{array}\\[0.8cm]
\hline
\end{array}
$}
\end{table}

\subsection{Estimation of the free energy along an external parameter}
\label{estimation_free_energy}

As defined in Eq.~\ref{def:fe}, the free energy $\pmf(\lambda)$ is the co-logarithm of the total expectation of the indicator function $\1_\lambda(\varlambda)$ for $\lambda \in \varLambda$. Its derivative is a conditional expectation of the $\lambda$-derivative of the extended potential given $\lambda$: 
\begin{equation}
\pmf^\prime (\lambda) = \e \left[ \partial_\lambda \pot (\lambda,\cdot )|\lambda\right]. 
\end{equation} 
This quantity can be estimated and then integrated to obtain the free energy. Table~\ref{table:fe_expectations} illustrates the various ways of estimating the corresponding total and conditional expectations using the generic weighing functions and a time-independent auxiliary biasing potential. 

Considering the binning, standard reweighting and adiabatic reweighting estimators, denoted by $\estimator{M}{H}$, $\estimator{M}{R}$ and $\estimator{M}{\Pi}$ respectively, potentially makes 6 direct methods of computing the free energy, while direct free energy methods are usually classified into three \emph{overlapping} categories in the literature: thermodynamic occupation~\cite{chandler:1987,frenkel:2001,wales:2003} (TO), thermodynamic integration~\cite{kirkwood:1935} (TI), free energy perturbation~\cite{zwanzig:1954} (FEP). 
We next analyze the correspondence between estimators and free energy methods. 

\begin{table}[h]
\caption{Expectation ratios based on which the various free energy estimators of Table~\ref{table:fe_estimators}  are constructed.}
\label{table:fe_expectations}
\[
\boxed{
\begin{array}{ccll}
\\[0cm]
 \pmf(\lambda) & = & \left\{  \begin{array}{lll} 
-\ln \dfrac{\eaux [ \dsg^\lambda_\aux ]}{\eaux [ \dsg_\aux  ]}& \multicolumn{2}{c}{\mbox{with generic weighing,}} \\[0.75cm]
  -\ln \dfrac{\beaux\left[ \eaux \left[ \dsg^\lambda_\aux(\cdot,q) \big|\rc\right] \right]}{\beaux \left[ \eaux\left[ \dsg_\aux(\cdot,q) \big|\rc \right] \right]} & = -\ln \dfrac{\eaux\left[ \dspi^\lambda_\aux \right] }{\eaux \left[  \dspi_\aux  \right]}  & \mbox{with conditioning, } 
 \end{array} \right.
\\[2cm]
 \pmf^\prime (\lambda) & =& \left\{ 
 \begin{array}{lll} 
\dfrac{\eaux [ \dsg^\lambda_\aux \partial_\lambda \pot(\lambda,\cdot) ]}{\eaux [ \dsg^\lambda_\aux  ]}& 
  \multicolumn{2}{c}{\mbox{with generic weighing,}} \\[0.75cm]
  \dfrac{\beaux\left[ \eaux \left[\partial_\lambda \pot(\lambda,\rc)\dsg^\lambda_\aux(\cdot,\rc)  \big|\rc \right]\right] }{\beaux \left[ \eaux\left[ \dsg^\lambda_\aux(\cdot,q) \big| \rc \right] \right]} & = - \dfrac{\eaux\left[ \partial^\lambda \dspi^\lambda_\aux \right] }{\eaux \left[  \dspi^\lambda_\aux  \right]}  & \mbox{with conditioning. }
 \end{array} \right.
   \\[1.5cm]
\end{array}
}
\]
\end{table}
\vspace{2cm}

As for methods belonging to the first category, estimating the free energy from the log-probability of $\lambda$ is done resorting to the binning estimator as follows
\begin{multline}\nonumber
 \widehat{\pmf(\lambda)}^M_\mathrm{TO} = - \ln \left[ \dfrac{\tfrac{1}{M}  \sum_{m=1}^M \1_\lambda(\varlambda_m)\exp[-\aux(\varlambda_m)] }{\tfrac{1}{M}\sum_{m=1}^M \exp[-\aux(\varlambda_m)]} \right] =\\ - \ln \dfrac{\iam \left( \dsh_\aux^\lambda \right)}{\iam \left( \dsh_\aux\right)} = - \ln \estimator{M}{H}\left(\1_\lambda\right). 
\end{multline}

Methods of the second category consists in estimating the free energy derivative and evaluating the free energy through numerical integration. This is what is actually done in the extended ABF technique.~\cite{cao:2014,lesage:2016} From an expanded ensemble simulation, a simple way of obtaining an estimate $\widehat{\pmf}^\prime(\lambda)$ of the mean force  involves the binning estimator $\estimator{M}{H}$
\begin{multline}\nonumber
\widehat{\pmf^\prime(\lambda)}^M_\mathrm{TI} = \dfrac{\tfrac{1}{M} \sum_{m=1}^M  \partial_\lambda\pot(\lambda,\brc_m) \1_\lambda(\varlambda_m)}{ \tfrac{1}{M} \sum_{m=1}^M \1_\lambda (\varlambda_m)} \\= \dfrac{\iam(\partial_\lambda \pot~\dsh_\aux^\lambda)}{ \iam(\dsh_\aux^\lambda)} = \estimator{M}{H}\big[\partial_\lambda \pot \big|\lambda \big]. 
\end{multline}
It may be suggested to estimate the free energy derivative resorting instead to the standard reweighting estimator as it is done in umbrella sampling.~\cite{torrie:1977}
Using the standard reweighting function introduced in Table~\ref{table:weighing} together with the property $\partial_\lambda \rf (\varlambda,\brc) = - \partial_\lambda \pot(\lambda,\brc)\rf (\varlambda,\brc)$, we have 
\begin{eqnarray}
 \label{estimator:SR_EE}\nonumber
\widehat{\pmf^\prime(\lambda)}_{\mathrm{FEP}}^M = \estimator{M}{R} \big(\partial_\lambda \pot\big| \lambda\big) = - \partial^\lambda \ln \left[ \tfrac{1}{M}{\sum_{m=1}^M}  \rf(\varlambda_m,\brc_m)\right]. 
\end{eqnarray}
The fact that the estimator can be written has a logarithmic derivative of another standard reweighting estimator indicates that it is not necessary to integrate the mean force to obtain the free energy. 
The standard reweighting approach pertains to the second category of free energy methods (FEP), which aim at directly evaluating the free energy by estimating a partition function ratio and then taking its co-logarithm
\begin{equation}\nonumber
\widehat{\pmf(\lambda)}^M_\mathrm{FEP} = - \ln \left[ \dfrac{\tfrac{1}{M}{\sum_{m=1}^M}  \rf(\varlambda_m,\brc_m)}{\tfrac{1}{M}{\sum_{m=1}^M}  \dsw_\aux(\varlambda_m,\brc_m)}  \right] = - \ln \estimator{M}{R} (\1_\lambda). 
\end{equation} 
To perform a conditioning with respect to the FEP and TO method above, one must replace the weighing factors $\dsh_\aux^\lambda$ and $\rf$ by their conditional expected values given $\rc$, which happens to be given by $\dspi^\lambda_\aux(\rc)$. Similarly, $\dsh_\aux$ and $\dsw_\aux$ must also be replaced by $ \dspi_\aux(\rc)$. 
One thus obtains the following estimator 
\begin{equation}\nonumber
 \widehat{\pmf(\lambda)}^M_\mathrm{AR} = - \ln \left[ \dfrac{\tfrac{1}{M}  \sum_{m=1}^M \dspi_\aux^\lambda(\brc_m)}{\tfrac{1}{M} \sum_{m=1}^M \dspi_\aux (\brc_m)} \right]. 
\end{equation}
Differentiating the free energy estimate with respect to $\lambda$ yields 
\begin{eqnarray}\nonumber 
\dfrac{d\widehat{\pmf(\lambda)}^M_{\mathrm{AR}}}{d\lambda} &=& - \partial^\lambda \ln \left[ \dfrac{\sum_{m=1}^M \dspi_\aux^\lambda(\brc_m)}{\sum_{m=1}^M \dspi_\aux(\brc_m)} \right]  \\ \nonumber
&=& - \dfrac{\sum_{m=1}^M \partial^\lambda \dspi^\lambda_\aux(\brc_m)}{\sum_{m=1}^M \dspi^\lambda_\aux(\brc_m)}  \\ \nonumber
&=&\estimator{M}{\Pi} (\partial_\lambda \pot |\lambda) = \widehat{\pmf^\prime(\lambda)}^M_{\mathrm{AR}}, 
\end{eqnarray}
where we substituted $-\partial_\lambda \pot(\lambda,\brc_m) \dspi_\aux^\lambda(\brc_m)$ for $\partial^\lambda \dspi_\aux^\lambda(\brc_m)$ in the second line and eventually identify with the AR estimate of $\e\left[ \partial_\lambda \pot |\lambda\right]$. The consistency between the estimated mean force and the derivative of the estimated free energy in the AR method is a property inherited from the FEP method. However, unlike FEP method, the adiabatic reweighting approach is directly related to thermodynamic integration, since the estimated mean force can also be constructed from the following conditioning scheme 
\begin{eqnarray}\nonumber
\widehat{\pmf^\prime(\lambda)}^M_{\mathrm{AR}}
& = & \dfrac{\tfrac{1}{M}\sum_{m=1}^M \partial_\lambda \pot (\lambda,\brc_m) \eaux\left[\dsh^\lambda_\aux(\varlambda) \big|\brc_m\right]}{\tfrac{1}{M} \sum_{m=1}^M \eaux\left[\dsh^\lambda_\aux(\varlambda) \big|\brc_m\right] }. 
\end{eqnarray}
This means in particular that each sampled point $\brc_m$ with ${1\leq m \leq M}$ contributes to the estimated mean force with an integrated weight over $\varLambda$ that is equal to one. 
This property is inherited from the TI method and does not hold for the FEP method. This explains why the latter method may yield completely inaccurate results in some circumstances. This point is illustrated in Section~\ref{sec:assessment}. 
The four ways of estimating the free energy and its derivative are summarized in Table~\ref{table:fe_estimators}. 

\begin{table}[h]
\caption{Standard (TI, FEP, TO) and conditioning (AR) estimators for computing mean forces and free energies. Note that, unlike $\dsw_\aux^\lambda$ and $\dspi_\aux^\lambda$ weighing functions, the function $\dsh_\aux^\lambda$ can not be differentiated with respect to $\lambda$. It results that two distinct methods (TO and TI) are based on binning.}
\label{table:fe_estimators}
\[
\begin{array}{|l|c|rl|}
\multicolumn{3}{c}{} \\
\cline{1-1}\cline{3-4}
&&&\\[-0.3cm]
\multicolumn{1}{|c|}{ \mbox{Estimation of free energy } \widehat{\pmf (\lambda)}_\mathrm{X}^M } &  & \multicolumn{2}{c|}{\mbox{Corresponding mean force } \widehat{\pmf^\prime (\lambda)}_\mathrm{X}^M} \\[0.25cm]
\hline 
&&&  \\[-0.2cm]
\widehat{\pmf (\lambda)}_\mathrm{TO}^M = 

- \ln \estimator{M}{H}\left(\1_\lambda\right)=-\ln \dfrac{\iam \left(\dsh_\aux^\lambda \right)}{\iam\left(\dsh_\aux \right)} 

& \rightarrow& 
 
 \multicolumn{2}{|l|}{\mbox{~~derivative by finite difference~~~} \rightarrow \quad \widehat{\pmf^\prime (\lambda)}_\mathrm{TO}^M } \\[0.75cm]

\widehat{\pmf (\lambda)}_\mathrm{TI}^M \leftarrow 

\mbox{ numerical quadrature } 

& \leftarrow&    &
\hspace*{-0.05 cm}
\dfrac{\iam\left(\dsh_\aux^\lambda \partial_\lambda \pot\right)}{\iam\left(\dsh_\aux^\lambda\right)} \; = \;\; \estimator{M}{H}(\partial_\lambda \pot |\lambda) \;\;

\\[0.75cm]

\widehat{\pmf (\lambda)}_\mathrm{FEP}^M= 
- \ln \estimator{M}{R}\left(\1_\lambda\right)= -\ln \dfrac{\iam \left(  \dsw_\aux^\lambda \right)}{\iam \left(  \dsw_\aux \right)}  

& \rightleftarrows&  

-\partial^\lambda \ln \dfrac{\iam \left(  \dsw_\aux^\lambda \right)}{\iam \left(  \dsw_\aux \right)} \;=& \! \! \dfrac{\iam \left(-\partial^\lambda \dsw_\aux^\lambda \right)}{\iam \left(  \dsw_\aux^\lambda \right)}\;\; =\;\;\estimator{M}{R}\left(\partial_\lambda \pot |\lambda\right) 
\\[0.75cm]

\widehat{\pmf (\lambda)}_\mathrm{AR}^M = 

- \ln \estimator{M}{\Pi}\left(\1_\lambda\right)=-\ln \dfrac{\iam \left(  \dspi_\aux^\lambda \right)}{\iam \left(  \dspi_\aux \right)} & 

\rightleftarrows& - \partial^\lambda \ln \dfrac{\iam \left(  \dspi_\aux^\lambda \right)}{\iam \left(\dspi_\aux \right)} \;= & \!\! \dfrac{\iam \left( -\partial^\lambda \dspi_\aux^\lambda \right)}{\iam \left(  \dspi_\aux^\lambda \right)} \; = \;\ \estimator{M}{\Pi}\left(\partial_\lambda \pot |\lambda\right)
\\[0.5cm]

\hline
\end{array}
\]
\end{table}

Next, we compare the variances of the TI, FEP and TO methods to the one of the AR method. The asymptotic variance for the TO/FEP method can be cast in the following form using the generic functions $\dsg^\lambda_\aux$ and $\dsg_\aux$ (see Appendix~\ref{appendix:proof}, Eq.~\eqref{variance_fepto})
\begin{equation} \label{variance_fepto_chap2}
\sigmaux^2\left[\widehat{\pmf(\lambda)}^\infty_{\mathrm{FEP/TO}} \right] =  \vara \left[ \frac{\dsg^\lambda_\aux(\varlambda,\rc)}{\eaux\left[ \dsg^\lambda_\aux \right]} - \frac{\dsg_\aux(\varlambda,\rc)}{\eaux\left[\dsg_\aux \right]}\right]. 
\end{equation}
The asymptotic variance of the AR method can be cast in the similar form (see Appendix~\ref{appendix:proof}, Eq.~\eqref{variance_fear})
\begin{multline}\nonumber
 \sigmaux^2\left[\widehat{\pmf(\lambda)}^\infty_{\mathrm{AR}} \right]  = \vara^\mathcal{Q} \left[ \eaux \left[\frac{\dsg^\lambda_\aux(\cdot,\rc)}{\eaux \left[ \dsg^\lambda_\aux \right]} - \frac{\dsg_\aux(\cdot,\rc)}{\eaux \left[\dsg_\aux \right] }\bigg| \rc \right] \right] \\ = \vara \left[ \frac{\dspi_\aux^\lambda}{\eaux\left[ \dspi^\lambda_\aux \right]} - \frac{{\dspi}_\aux}{\eaux\left[{\dspi}_\aux \right]}\right]. 
\end{multline}
The law of total variance then entails the following strict inequality
\begin{multline}\nonumber
 \vara \left[ \dfrac{\dsg^\lambda_\aux}{\eaux\left[ \dsg^\lambda_\aux \right]} - \dfrac{\dsg_\aux}{\eaux\left[\dsg_\aux \right]}\right]-\vara \left[ \dfrac{\dspi_\aux^\lambda}{\eaux\left[ \dspi^\lambda_\aux \right]} -  \dfrac{{\dspi}_\aux}{\eaux\left[{\dspi}_\aux \right]}\right] = \\
\eaux\Bigg\{\vara \left[ \dfrac{\dsg^\lambda_\aux(\cdot,q) }{\eaux\left[ \dsg^\lambda_\aux \right]} - \dfrac{\dsg_\aux(\cdot,q) }{\eaux\left[\dsg_\aux\right]} \bigg| \rc \right] \Bigg\}>0. 
\end{multline}

It results the following strict inequality for the asymptotic variances
\[\sigmaux^2\left[\widehat{\pmf(\lambda)}^\infty_{\mathrm{AR}} \right] 
< \min \left\{\sigmaux^2\left[\widehat{\pmf(\lambda)}^\infty_{\mathrm{TO}} \right],  \sigmaux^2\left[\widehat{\pmf(\lambda)}^\infty_{\mathrm{FEP}} \right] \right\}.\]

With regards to the TI method, the efficiencies of the adiabatic reweighting and histogram binning estimators are more easily compared considering the derivative of the free energy. The asymptotic variances associated with the $\estimator{M}{\Pi}(\partial_\lambda \pot |\lambda)$ and $\estimator{M}{H}(\partial_\lambda \pot |\lambda)$ estimators satisfy the relation $\sigmaux^2\left[\estimator{M}{\Pi}(\partial_\lambda \pot |\lambda) \right]<\sigmaux^2\left[\estimator{M}{H}(\partial_\lambda \pot |\lambda)\right]$. 
It is therefore always preferable to estimate free energies in combination with a conditioning scheme when the auxiliary biasing potential is time-independent. Remarkably, whatever the standard free energy method (TO, FEP, TI) that is chosen, conditioning with respect to the external parameter provides the same AR estimator, as illustrated in Table~\ref{table:fe_estimators}. 

When estimating conditional expectations of an observable $\rc \mapsto \obs(\lambda,\rc)$ from expanded ensemble simulations with the external parameter taking values in $\varLambda$, the biasing potential should ideally be chosen such that it minimizes a statistical variance averaged over $\varLambda$. It appears that  some optimal biasing potentials for the adiabatic reweighting estimator $\estimator{M}{\Pi}(\obs | \lambda)$ can be constructed adaptively in expanded ensemble simulations using partial biasing techniques.~\cite{fort:2016} This avenue of research will be discussed in a companion paper. Here, we consider that the biasing potential has converged to the free energy, which is the optimal biasing potential for the thermodynamic occupation method when the number of bins is large. 

\section{Assessment of variance reduction}
\label{sec:assessment}
We assess the numerical performance of the aforementioned methods of estimating the free energy $\pmf(\lambda)$ by computing the reduction of the statistical variances in subsection~\ref{subsec:assessment}. We also justify the conditioning approach for estimating the probability of rare events in subsection~\ref{sec:rare_event_probability}. 

\subsection{Free energy estimations}\label{subsec:assessment}
 
We consider a one-dimensional system with extended potential 
\begin{equation} \label{toymodel}
\pot(\varlambda,\brc) = \omega \left( \brc^2 - 2 \brc\varlambda \right), 
\end{equation}
and set $\varLambda=\left\{ \lambda^j \right\}_{0\leq j \leq J}$ with $\lambda^j= j/J$. 
The extended probability $\p_\aux(\varlambda,\brc)$ is proportional to $\exp\left[\aux(\varlambda) - \omega \brc^2 +2 \omega \varlambda \brc  \right]$
and the conditional probability of $\rc$ given $\lambda$ is
\begin{equation}\label{eq:distrib}
\pi(\rc | \lambda) = \sqrt{\omega/\pi} \exp\left[-\omega(\rc-\lambda)^2 \right]. 
\end{equation}  
For small values of $\omega$, the various conditional probabilities substantially overlap with each other. The degree of overlap decreases with increasing $\omega$.  

We generate a series of free energy estimates $\widehat{\pmf}^M_\mathrm{X}(\lambda;k)$ wherein $k \in\llbracket 1,K \rrbracket $ is the simulation index, using $K=2\cdot 10^3$ simulations, employing method $\text{X} =$ TI, FEP, TO or AR and setting the biasing potential equal to the true free energy $\pmf(\lambda)$. 
Each simulation consists of up to $M_{\mathrm{max}}=10^5$ sampled states. We compute the variance of the generated estimates using the following variance estimator  
\begin{multline}\label{variances}
\bar{\mathrm{var}}^K\left(\widehat{\pmf}^M_\mathrm{X} \right) = \tfrac{1}{\cardTheta}\sum_{\lambda\in\varLambda}\tfrac{1}{K}\sum_{k=1}^K \left( \dfrac{ \estimate{P}{X}(\lambda;k)}{\tfrac{1}{K}\sum_{h=1}^K \estimate{P}{X}(\lambda;h)}-\right. \\
\left.\dfrac{ \sum_{\varlambda \in \varLambda} \estimate{P}{X}(\varlambda;k)}{\sum_{\varlambda \in \varLambda} \tfrac{1}{K} \sum_{h=1}^K \estimate{P}{X}(\varlambda;h)} \right)^2, 
\end{multline}
where $M$ ranges from $5\cdot10^2$ to $M_{\mathrm{max}}$ and
\begin{equation}
 \estimate{P}{X}(\lambda;k) = \frac{\exp\left[-\estimate{A}{X}(\lambda;k)\right]}
 {\sum_{\varlambda\in\varLambda}\exp\left[ \pmf(\varlambda)- \estimate{A}{X}(\varlambda;k)\right]}. 
\end{equation} 
This quantity arises from the generic expression of the asymptotic variances of $\estimate{A}{X}$ and corresponds to the $k$th estimate of the quantity $\exp\left[-\pmf(\lambda)\right] \times\mathrm{p}^\varLambda_{\pmf}(\lambda)$. 
Note that when the bin of $\1_\lambda$ remains unvisited during the entire $k$th simulation, $\estimate{A}{TO}(\lambda;k)$ is infinite and $\estimate{P}{TO}(\lambda;k)$ is zero. However, the sums $\sum_{h=1}^K\estimate{P}{TO}(\lambda;h)$ over the $2\cdot 10^3$ simulations never canceled, so that the variance estimator~\eqref{variances} was always defined. 

We display in Fig.~\ref{fig:asymp} the estimated variances multiplied by the sample size $M$ as a function of $M$ in order to observe the convergence towards the asymptotic limit for the given number $K$ of independent simulations. 
We observe that a considerable variance reduction is achieved in practice owing to the conditioning scheme. Furthermore, the asymptotic regime is obtained faster and the estimated variances are also less fluctuating with conditioning. 

While fluctuations decrease with increasing $K$, it is extremely costly to obtain accurate estimates of the asymptotic variance for large sample sizes $M$ for the FEP method when $\omega$ is large. 
The observed inefficiency of the FEP method results from the fact that the simulation samples from non-overlapping distributions. In real FEP computations, the reaction path is usually divided into a number of small pieces and the standard reweighting estimator is applied for each of them. Staged simulations are also routinely performed within the umbrella sampling approach.~\cite{torrie:1977}

\begin{figure}
\begin{center}
 \includegraphics[width=0.9\textwidth]{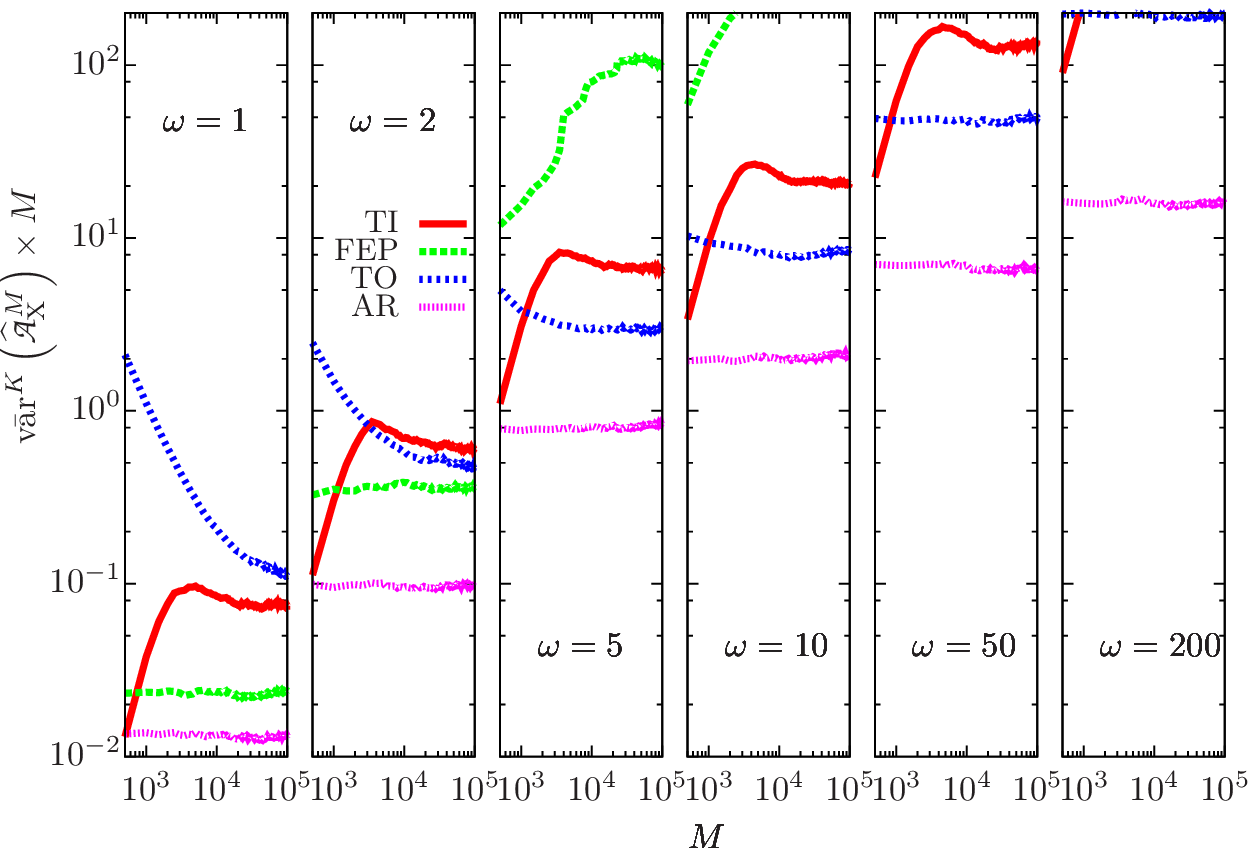}
\end{center}
 \caption{The estimated variances $\bar{\mathrm{var}}^K(\widehat{\pmf}^M_\mathrm{X})$ are evaluated using $K=2\cdot 10^3$ estimates of $\widehat{\pmf}^M_\mathrm{X}$ for methods X=TI, FEP, TO, AR. The scaled variances that are plotted as a function of $M$ reach a plateau value asymptotically.}
\label{fig:asymp}
\end{figure}

\label{subsec:confining}

\subsection{Estimation of rare-event probabilities}
\label{sec:rare_event_probability}
Here, we emphasize the importance of conditioning for estimating probabilities of rare-events.~\cite{terrier:2015} Let the conditional distribution given $\lambda$ equal to 0 in~Eq.~\eqref{eq:distrib} be the reference distribution. The probability to sample a point 
at a position larger than $1$ from this distribution is $\mathpzc{C}=\mathrm{erfc}(\sqrt{\omega})/2$  (the integral of the normal probability distribution $\rc \rightarrow \sqrt{\omega / \pi }\exp(-\omega  \rc^2)$  from 1 to $\infty$). Here, we show that it is possible to compute $\mathpzc{C}$ through biased sampling of $\mathrm{p}_\pmf(\varlambda,\brc)$, even for very narrow conditional distributions $\pi(\brc|\lambda)$ for which the value of $\omega$ is large and $\mathpzc{C}$ is very small. We thus bias the simulation by adding the soft restraint $\restrain(\lambda,\rc)= - 2 \lambda\omega \rc$ to the potential energy $\pot_0(\rc)=\omega \rc^2$, so as to gradually increase the fraction of points such that $\rc \ge 1$ with increasing $\lambda$ value. This way of proceeding corresponds to the so-called tilting protocol that is used in path sampling simulations to sample rare trajectories from a path distribution.~\cite{athenes:2002b,athenes:2004,adjanor:2005,terrier:2015} 

With the biasing potential set equal to the free energy, the probability $\mathpzc{C}$ that point $\rc$ is larger than one is $\e\left[\1_{\mathtt{B}}\big|\lambda=0\right]$, 
where $\mathtt{B} = [1,+\infty[$ and $\1_{\mathtt{B}}(\rc)$ denotes the indicator function  equal to $1$ if $\rc\in \mathtt{B}$ and $0$ otherwise. The conditional expectation is estimated using the adiabatic reweighting  estimator and standard reweighting estimator respectively given by
\begin{align}
\estimator{M}{\Pi} \left(\1_{\mathtt{B}}\big|0\right) &= \dfrac{\tfrac{1}{M} \sum_{m=1}^M \1_{\mathtt{B}}(\brc_m) \pi^0_\pmf(\brc_m)}{ \tfrac{1}{M} \sum_{m=1}^M \pi^0_\pmf(\brc_m)},  \label{eq:BF} \\[0.25cm]
 \estimator{M}{R} \left( \1_{\mathtt{B}}\big|0\right) &= \dfrac{\tfrac{1}{M} \sum_{m=1}^M \1_{\mathtt{B}}(\brc_m) \w^0_\pmf(\varlambda_m,\brc_m)}{ \tfrac{1}{M} \sum_{m=1}^M \w^0_\pmf(\varlambda_m,\brc_m)} \label{eq:SR},  
\end{align}
as obtained by replacing the observable $q \mapsto \obs(0,\rc)$ by the indicator function. Here, we did not consider the binning estimator as it is obviously not suited to the present rare event problem. The goal is to retrieve statistics from configurations exhibiting large $\brc_m$ values which are observed concomitantly with large $\varlambda_m$ sampled values. 

We observe in Fig.~\ref{fig:reductionfactor} that, with increasing $\omega$ parameter, only the AR estimator yields accurate estimates of $\mathpzc{C}$. The computational speed-up of convergence that is achieved by conditioning the standard reweighting estimator can be assessed from their respective standard errors. The inefficiency of the standard reweighting approach was explained in term of the fluctuation theorem in Ref.~\citenum{terrier:2015}. As soon as $\omega$ becomes larger than 20, conditioning reduces the variance by more than four orders of magnitude. 
For rare-event problems, available alternatives~\cite{athenes:2012} to conditioning consists of post-processing the harvested information by implementing a self-consistent reweighting estimator.~\cite{shirts:2008} 

\begin{figure}
\begin{center}
 \includegraphics[width=0.9\textwidth]{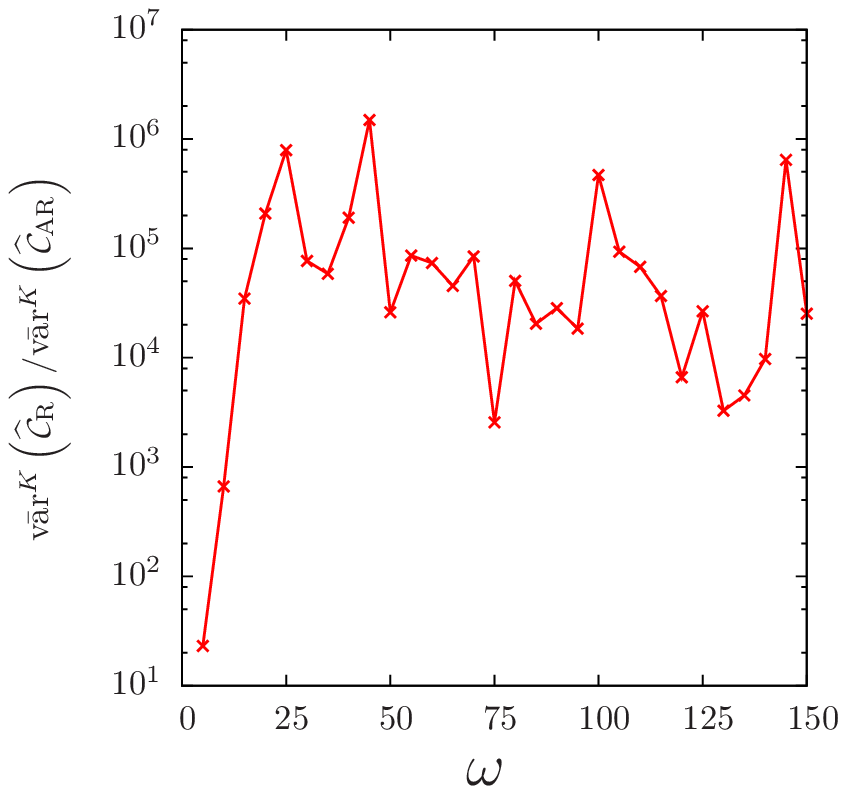}
\end{center}
 \caption{Variance reduction factor obtained through conditioning when estimating the probability $\mathpzc{C}$. The quantities $\widehat{\mathpzc{C}}_\mathrm{R}$ and $\widehat{\mathpzc{C}}_\mathrm{AR}$ denote the estimates obtained using $\estimator{M}{R} \left(h_{[1,+\infty[}\big|0\right)$ and $\estimator{M}{\Pi} \left( h_{[1,+\infty[}\big|0\right)$ estimators, respectively. We use $M=10^4$ points for each estimate and  $K=10^5$ estimates for both statistical variances $\bar{\mathrm{var}}^K$.}
 \label{fig:reductionfactor}
\end{figure}

\section{Estimation of the free energy along a reaction coordinate}

\label{application:lj38}

\subsection{$\xi$-AR estimator}

In many problems, the quantity of interest is the free energy along an internal reaction coordinate $\xi(q)$ rather than along an external parameter. When the reaction coordinate and the external parameter are coupled harmonically, the free energy along $\xi(q)$ similarly appears as the co-logarithm of a total expectation
\begin{equation}\label{def:fexi}
 \pmff(\xis) = - \ln \e \left[\2_{\xis}\circ\xi\right]
\end{equation}
where $\xis$ is a possible value of the reaction coordinate $\xi(q)$, recalling that $\2_{\xis}$ denotes the indicator function associated with the bin of $\xis$. The measured thermodynamic expectation is the probability to observe the reaction coordinate taking value $\xi^\star$. Substituting $\2_\xis\circ \xi(q)$ for the observable $\obs(\lambda,q)$ in Eq.~\eqref{estimator:total_general}, one obtains the AR estimator of total expectation $\e\left[ \2_{\xis}\right]$ 
\begin{eqnarray} \nonumber \estimator{M}{\Pi} (\2_{\xis} ) & = & \dfrac{\frac{1}{M} \sum_{m=1}^M  \2_\xis\circ\xi(\brc_m) \dspi_\aux (\brc_m)  }{\frac{1}{M} \sum_{m=1}^M \dspi_\aux (\brc_m)}. 
\end{eqnarray}
The corresponding asymptotic variance is given in Eq.~\ref{estimator:total_bobs} of Appendix~\ref{appendix:proof}. 
To write the function $\dspi_\aux$ explicitly, let introduce the following effective restraining potential 
\begin{equation}\label{def:raux}
 \raux(\xis) = \ln \sum_{\varlambda\in\varLambda} \exp\left[ \aux(\varlambda)- \restrain(\varlambda,\xis) \right]. 
\end{equation}
The marginal probability of $\rc$ therefore writes $\bpaux(\rc)=\exp\left[-\pot_0(\rc) +\raux \circ\xi(\rc) - \n_\aux  \right]$.  Thanks to the potential $\varepsilon$, the identity  $\ro \circ \xi (\rc) = 0$ holds whatever $\rc\in \mathcal{Q}$ and we have 
\[\dspi_\aux(\rc) = \frac{\p^{\mathcal{Q}}_0(\rc)}{\bpaux(\rc)} = \exp\left[ - \raux\circ \xi(\rc) \right] .\] 
Hence, the AR estimator further simplifies into
\begin{equation}\label{estimator:total_RC}
 \estimator{M}{\Pi}(\2_\xis) = \exp \left[-\raux (\xis)\right] \frac{\tfrac{1}{M} \sum_{m=1}^M \2_\xis(\xi_m)}{\tfrac{1}{M} \sum_{m=1}^M \exp\left[-\raux(\xi_m)\right] },  
\end{equation}
where $\xi_m = \xi(\brc_m)$. 
The estimator of the free energy is eventually obtained by taking the co-logarithm 
\begin{equation}\label{estimator:xi-ar}
\widehat{\pmff(\xis)}^M_{\mathrm{AR}} = \raux(\xis) - \ln  \iam\left( \1_\xis\circ\xi \right) + \ln \iam \left( \exp \left[-\raux\circ\xi \right]\right). 
\end{equation}
This specific estimator, denoted $\xi$-AR estimator in the following, has been applied to free energy calculations associated with vacancy migration in $\alpha$-iron~\cite{cao:2014} and molecular folding.~\cite{lindahl:2014} 
Recently, it has been proposed to evaluate the free energy $\pmff(\xis)$ using an alternative estimator~\cite{lesage:2016} referred to as \textit{corrected $z$-average restraint} (CZAR) wherein $\xis\equiv z$. The CZAR estimator only differs from the $\xi$-AR estimator in the way the effective restraining potential is evaluated in~\eqref{estimator:xi-ar}. In CZAR, $\raux(\xis)$ is evaluated through thermodynamic integration, i.e. by integrating estimates of its derivative $\raux^\prime(\xis)$. This is done by casting in the effective restraining gradient in the form of a conditional expectation 
\begin{eqnarray}
 \raux^\prime(\xis) &=& \partial_\xis \ln \sum\nolimits_{\lambda\in\varLambda}  \exp\left[ \aux(\lambda)-\restrain(\lambda,\xis) \right] \nonumber \\ &= & -\sum\nolimits_{\lambda\in\varLambda} \partial_\xis\restrain(\lambda,\xis) \pi_\aux^\lambda (\xis ) \nonumber \\
 &=& -\eaux \left[ \partial_\xis\restrain(\cdot,\xis)  |\xis\right], \label{rauprime}
\end{eqnarray}
where the conditional probability given $\xis$ reads
\begin{equation}\label{gibbs:sampler}
\pi^\lambda_\aux(\xis) = \exp\left[ \aux(\lambda)-\restrain(\lambda,\xis) - \raux(\xis) \right]. 
\end{equation}
The effective restraining gradient in CZAR is estimated using the $\estimator{M}{H}$ estimator in which binning is performed using indicator function  $\2_\xis$ instead of $\1_\lambda$: 
\begin{equation}\label{estimator:czar}
 \estimator{M}{H}\left(\partial_\xis \restrain |\xi=\xis  \right) = 
\beta\kappa \left\lbrace 
\frac{\sum_{m=1}^M\varlambda_m \2_{\xis}(\xi_m)}{\sum_{m=1}^M\2_{\xis}(\xi_m)}
-\xis  \right\rbrace,    
\end{equation}
where we resorted to $\partial_\xis \restrain(\varlambda,\xis)= \beta\kappa (\xis- \varlambda)$, omitting the correcting force $\varepsilon^\prime(\xis)$. 
Conditioning the binning estimator in Eq.~\eqref{estimator:czar} yields 
 \begin{equation} \label{estimator:cczar}
 \estimator{M}{\Pi}\left(\partial_\xis \restrain |\xis  \right) = 
\beta\kappa \left\lbrace 
\frac{\sum_{m=1}^M \eaux\left[ \varlambda |\xis\right] \2_{\xis}(\xi_m)}{\sum_{m=1}^M\2_{\xis}(\xi_m)} -\xis  \right\rbrace  = - \raux^\prime(\xis),    
\end{equation}
which, once integrated, leads to the formulation of the $\xi$-AR estimator. Because the asymptotic variance of $\estimator{M}{\Pi}\left(\partial_\xis \restrain |\xis  \right)$ in Eq.~\eqref{estimator:cczar} is zero, one concludes, owing to the delta method, that the CZAR estimator exhibits a larger asymptotic variance than the $\xi$-AR estimator~\eqref{estimator:xi-ar}. 

To quantify the cumulated error in the estimation of $\Delta \raux = \raux(\xi_\mathrm{max}) - \raux(\xi_\mathrm{min})$ with CZAR method, let first write down the asymptotic variance of estimator $\estimator{M}{H}\left(\partial_\xis \restrain |\xis  \right)$: 
\begin{eqnarray}\label{variance:czar}
\sigma_\aux^2\left(\estimator{\infty}{H}(\partial_\xis \restrain |\xis) \right) = \frac{ \vara \left[ \partial_\xis \restrain |\xis\right]}{\eaux\left[\2_\xis \right]}, 
\end{eqnarray}
where the variance is conditional on $\xis$. Next, we assume that $\raux$ is integrated over $\Xi$, a discrete set of evenly spaced values along $\Delta \xi =\xi_\mathrm{max}-\xi_\mathrm{min}$ interval. Let $\|\Xi\|$ be the cardinal of $\Xi$. Then, 
$\Delta \xi/\|\Xi \|$ is the constant spacing between consecutive values and the cumulated asymptotic error writes 
\begin{equation}\label{error:czar}
 \sigma(\Delta \raux) = \frac{\Delta \xi \beta\kappa}{\|\Xi\| } \sqrt{\sum\nolimits_{\xis \in\Xi} \frac{\vara\left[\varlambda|\xis\right]}{\eaux\left[\2_\xis \right]}}. 
\end{equation}
The $\varlambda$ variable in each conditional variance is assumed to be i.i.d. This assumption is verified when a Gibbs sampler directly generates the $\varlambda$ values from the conditional probability~\eqref{gibbs:sampler}. 
In contrast, residual metastability usually persists along $\xi$ and the sampled values $\xi(q)$ will be highly correlated. The dominating contribution to the asymptotic variance of the CZAR estimator is expected to arise from the $\iam\left( \1_\xis\circ\xi \right)$ term. Thus, the cumulated error~\eqref{error:czar} should not be significant in practical applications. We illustrate this point in next subsection. 

\subsection{The 38-atom Lennard-Jones cluster and $Q_4$ bond-orientational order parameter}
\label{subsec:arhc}

We consider the problem of calculating the free energy along the $Q_4$ bond-orientational order parameter~\cite{steinhardt:1983} in the 38-atom Lennard-Jones (LJ) cluster. 
The LJ potential reads 
\begin{equation}
 \mathcal{V}_\mathrm{LJ}(\rc) = 4\epsilon\sum_{i>j} \left[ r_{ij}^{-12} - r_{ij}^{-6} \right], 
\end{equation}
where $r_{ij}= \| \rcm_j -\rcm_i \| /\sigma$ is the reduced distance separating atoms  $\rcm_i$ and $\rcm_j$.  
LJ reduced units of length, energy and mass ($\sigma$ = 1, $\epsilon$ = 1, $m$ = 1) will be used in the following. LJ$_{38}$ undergoes a two-stage phase change with increasing
temperature.~\cite{neirotti:2000a,neirotti:2000b} A solid-solid transition between the
truncated octahedral funnel and the icosahedral funnel
occurs near T$_\mathrm{ss}$=0.12, melting follows near T$_\mathrm{sl}$= 0.17. As in other finite size systems,~\cite{wales:2003} the transitions are not sharp but gradual. The most stable octahedral and icosahedral structures of the 38-atom cluster are a truncated octahedron with energy $E_0 = -173.9284$ (global minimum) and an incomplete icosahedron with energy $E_1 = -173.2524$, respectively.
The order parameter $Q_4$ is a convenient collective variable to distinguish between the cubic structure favored at low temperatures and the icosahedral isomers above $T_{\mathrm{ss}}$. The values of $Q_4$ typically range from $0.002~\--~0.06 $ (icosahedral structures)  
to $0.19$ (octahedral structures). 
The ABF method is implemented in the harmonic expanded ensemble resorting to the AR estimator in order to compute the free energy along the external parameter $\lambda$ coupled harmonically to $Q_4$. The standard ABF method~\cite{darve:2001,darve:2008,comer:2014} cannot be performed directly along $Q_4$ because the second derivatives of the $Q_4$ are not available. The restraining potential is set to $\frac{\kappa}{2T_{\mathrm{ref}}} \| Q_4(\rc)-\lambda\|^2$ and the extended potential exhibits the form given in~Eq.~\eqref{eq:quadratic}, so that the free energy derivative $\pmf^\prime(\lambda)$ is constructed from the following identity 
\begin{equation} \label{biased_ratio}
\e\left[ \partial_\lambda \pot|\lambda\right] = \frac{\kappa}{T_\mathrm{ref}}\left[ \lambda - \dfrac{\beaux\left[  \pi_\aux^\lambda Q_4\right]}{\beaux\left[\pi^\lambda_\aux \right]} \right]. 
\end{equation}
We set $\kappa=10^{4}$ and $T_\mathrm{ref}=0.15$. At this temperature, spontaneous structural transitions cannot be observed on the simulation timescale when $\aux(\lambda)$ is set to 0 (no bias). The implemented ABF method, described in Algorithm~1 of the \textcolor{blue}{Supplementary Material}, adapt the bias $\aux(\lambda)$ on the free energy $\pmf(\lambda)$ via the running estimate of its gradient. One collateral advantage of conditioning is to facilitate the sampling process, as it needs not propagating the external parameter. Twelve replicas of the system are propagated using Langevin dynamics~\cite{cao:2014} (a Metropolis algorithm is also implementable) to sample the marginal probability of $q$ directly. 
The force acting on a replica is the gradient of the involved marginal probability (see Eq.~\ref{eq:quadratic})
\begin{eqnarray}
 \nabla_q \ln \bpaux (q) & = & - \nabla_q \pot_0 (q) + \nabla_q Q_4(q) \raux^\prime \circ  Q_4(q) , 
\end{eqnarray}
where $\pot_0 = \mathpzc{V}_{\mathrm{LJ}}/T_\mathrm{ref}$ 	and $\raux^\prime$ is detailed in Eq.~\eqref{rauprime}. 
The time-step $\tau = 5\cdot 10^{-5}\; \rm{(lju)}$ is chosen very small so as to keep discretization errors negligible. 
The simulation length $M=10^8$ and replica number $K=12$ enable  the auxiliary biasing force to converge. Replicas are handled in parallel using a multi-core processor. They share the same biasing force during both the learning and subsequent production runs.
We next freeze the biasing potential to the converged free energy ${A}(\lambda)=\aux_M(\lambda)$ and implement Algorithm~2 of the \textcolor{blue}{Supplementary Material} to check that the  
the sampling is homogeneous along $Q_4$ parameter despite the persistence of strong correlations, as shown in Fig.~\ref{fig:homogeneous}. 
\begin{figure}
\centering
\includegraphics[width=0.98\textwidth]{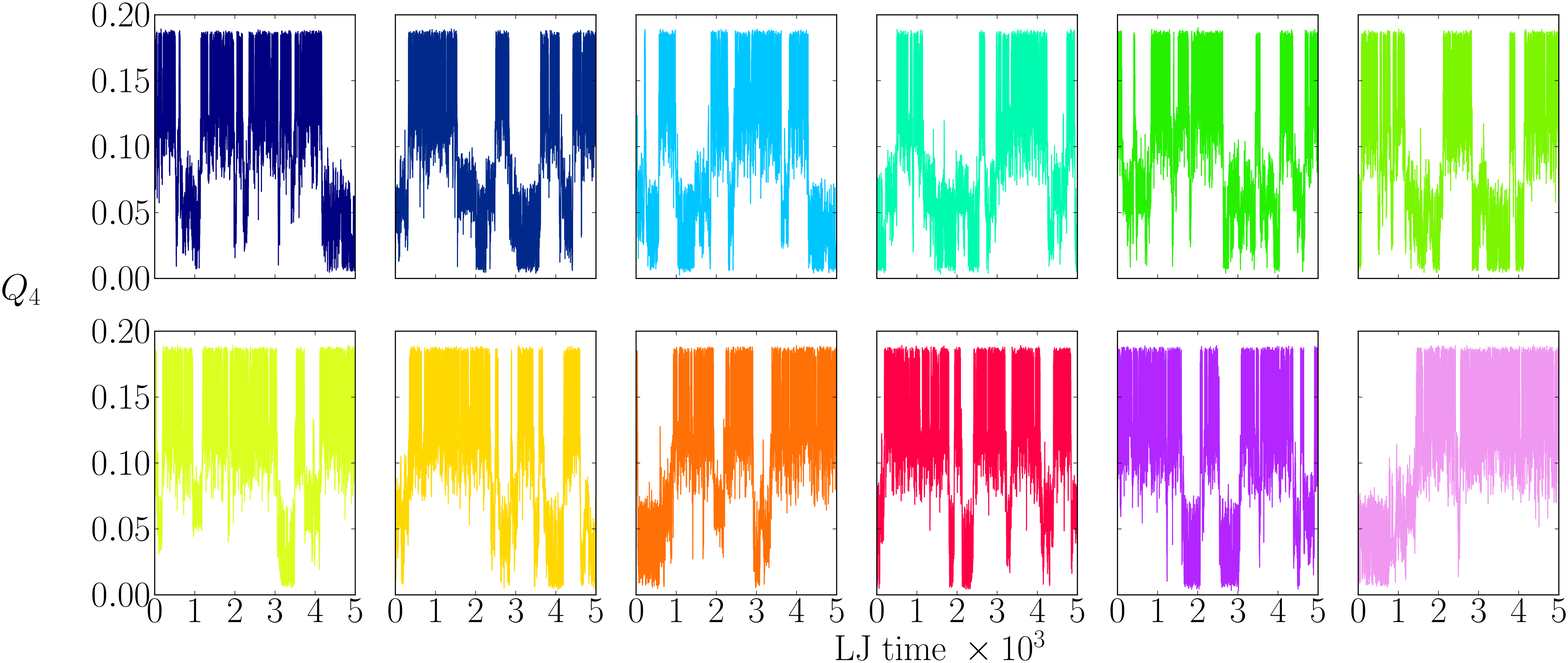}
\caption{Evolution of the $Q_4$ order parameter during the production run for each replica. Each panel corresponds to a distinct replica.}
\label{fig:homogeneous}
\end{figure}
With Algorithm~2, we also construct 40 (biased) histograms of $Q_4^\star$, using estimates $\mathtt{I}^{KM}(\2_{Q_4^\star})$ of $\e_{A}\left[\1_{Q_4^\star}\right]$ with $K=12$, $M=10^{8}$, a bin size of $2\cdot 10^{-4}$ and $Q_4^\star$ ranging from 0 to 0.2.   
The free energy along $Q_4$ order parameter is then estimated using the $\xi$-AR estimator and the 40 harvested histograms denoted by $\mathrm{P}_{A}(Q_4^\star)$. The obtained results are plotted in the three panels of Fig.~\ref{fig:profile}. The brackets $\langle \rangle$ indicate averaging over the series of 40 independent simulations. 
Thus, the displayed quantity $\langle B_A(Q_4) \rangle$ in Fig.~\ref{fig:profile}.a represents the (averaged and scaled) effective biasing potential, $\beta^{-1} \raux\left(\mathcal{Q}_4\right)$. The displayed histogram of $Q_4$ in Fig.~\ref{fig:profile}.b represents the averaged of the  $\mathrm{P}_{A}(Q_4^\star)$ estimates of $\e_{A}\left[ \2_{Q_4^\star} \right]$. 
The free energies displayed in Fig.~\ref{fig:profile}.c are estimated from $F(Q_4^\star)=B_{\widehat{A}}(Q_4^\star) - \beta^{-1} \ln \mathrm{P}_{A}(Q_4^\star)$. They correspond to the unreduced free energies $\beta^{-1}\pmff(Q_4^\star)$. 
\begin{figure}
\centering
\includegraphics[width=0.7\textwidth]{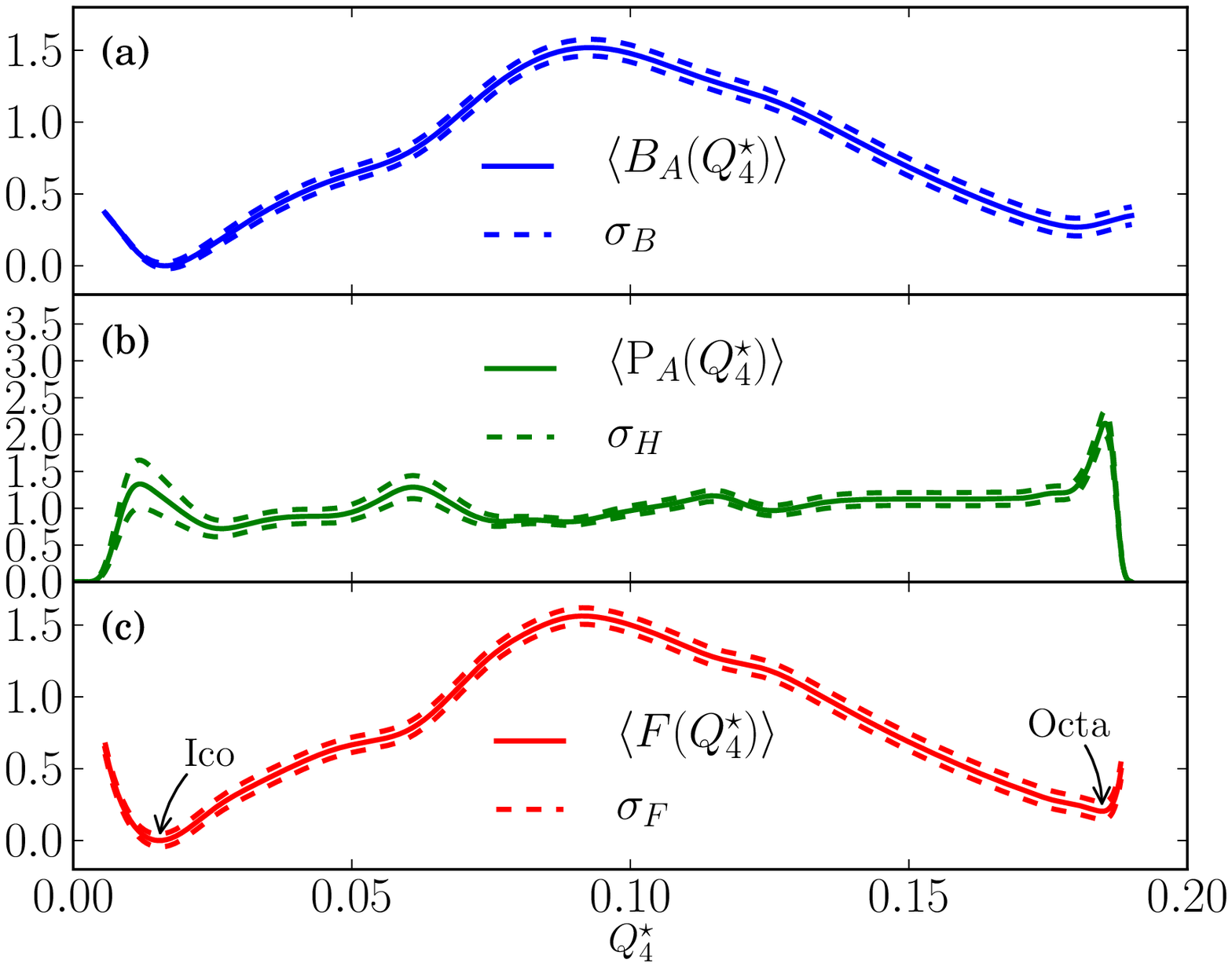} 
\caption{(a) mean effective biasing potential $\langle B_{A}(Q^\star_4) \rangle$ shifted by its minimum value; (b) $\langle \mathrm{P}_{A}(Q_4^\star) \rangle$ is the mean estimate of $\e_{A}\left[\1_{Q^\star_4}\right]$ histogram; (c) mean estimate of free energy $\langle F(Q^\star_4)= B_{A}(Q_4^\star) - \beta^{-1} \ln \mathrm{P}_{A}(Q_4^\star) \rangle$ shifted by its minimum value. Means and displayed standard deviations $\sigma_B$, $\sigma_H$ and $\sigma_\mathcal{F}$ (for 68\%-confidence intervals) are evaluated from 40 simulations.}
\label{fig:profile}
\end{figure}

The variance reduction compared to CZAR method is too small to be measurable from 40 simulations. Instead, we estimate the order of magnitude of the error that is made in the estimation of $\Delta\raux$ with the CZAR method.   
Neglecting the variations of the biasing forces along $\lambda$, which are much smaller than the value of $\kappa/T_\mathrm{ref}$, the conditional variance in Eq.~\eqref{error:czar} simplifies into $\mathbbm{V}_0(\varlambda|Q_4^\star)$ which is equal to $T_{\mathrm{ref}}/\kappa$. If we additionally neglect the variations of the histogram $\mathrm{P}_{A}(Q_4^\star)$ and 
assume that $\e_{A}\left(\2_{Q_4^\star}\right)= \|\Xi\|^{-1}$ between the icosahedral and octahedral structures in Eq.~\eqref{error:czar}, then the integrated error between the two structures simplifies into $\Delta Q_4 /\sqrt{T_\mathrm{ref}\kappa}$. 
The expected standard deviation on $\Delta B_{A} = T_\mathrm{ref} \Delta \mathpzc{B}_A$ is therefore $\Delta Q_4 \sqrt{T_\mathrm{ref}/(\kappa MK)}=2.0\cdot 10^{-4}$ wherein  $M K = 1.2\cdot10^{9}$ is the total number of sampled points. This is less than two orders of magnitude than $6.1\cdot 10^{-2}$, the standard deviation that is obtained from the 40 estimates of the free energy difference $\Delta F$.

\section{Characterization of structural transition}\label{sec:characterization}

From the free energy profile along $Q_4$, the occurrence probabilities of the icosahedral and octahedral structures can be evaluated. For these two structures, the Landau free energies $\mathtt{A}({\mathrm{ico}}|T_\mathrm{ref}^{-1})$ and $\mathtt{A}({\mathrm{octa}}|T_\mathrm{ref}^{-1}))$, defined as minus the logarithm of their respective occurrence probabilities, can be directly evaluated. We now wish to compute the two Landau free energies at other temperatures so as to characterize $T_\mathrm{ss}$, the \textit{solid-solid} structural transition temperature for which 
$\mathtt{A}({\mathrm{ico}}|T_\mathrm{ss}^{-1}) = \mathtt{A}({\mathrm{octa}}|T_\mathrm{ss}^{-1})$. 
Unfortunately, ABF simulations along $Q_4$ do not converge at temperature lower than $0.13$, meaning that $Q_4$ becomes a bad reaction coordinate and that $T_\mathrm{ss}$ can not be determined this way. 

The problem is solved by evaluating $\mathtt{A}(\mathtt{x}|T^{-1})$ as a function of inverse temperature separately for the two structures $\mathtt{x}$ of set $\mathtt{X} \triangleq \left\{\mathtt{ico},\mathtt{octa}\right\}$ from the knowledge of $\pmf(T^{-1}|\mathtt{ico})$ and $\pmf(T^{-1}|\mathtt{octa})$, the free energies along the inverse temperature given the structure. The identity connecting the two kinds of free energies corresponds to  Bayes formula expressed in log-space
\begin{equation}\label{Lambdaxgt}
 \mathtt{A}(\mathtt{x} | T^{-1}) = \pmf (T^{-1} | \mathtt{x} ) + \mathtt{A}(\mathtt{x}) -\pmf(T^{-1}),
\end{equation}
where $\mathtt{A}(\mathtt{x})$ is the marginal colog-probability of structure $\mathtt{x}$ in the expanded ensemble. 
Subtracting Bayes formula at $T^{-1}_\mathrm{ref}$ in log-space from Eq.~\eqref{Lambdaxgt} cancels the marginal colog-probability of $\mathtt{x}$, which leads to the desired Landau free energy
\begin{equation}\nonumber
 \mathtt{A}\left(\mathtt{x} | T^{-1}\right) = \pmf \left(T^{-1} | \mathtt{x} \right) - \pmf\left(T^{-1}_\mathrm{ref}|\mathtt{{x}}\right) + \mathtt{A}\left(\mathtt{{x}} | T^{-1}_\mathrm{ref}\right) + \pmf\left(T^{-1}_\mathrm{ref}\right)-\pmf\left(T^{-1}\right).  
\end{equation} 
The free energy difference $\pmf(T^{-1}_\mathrm{ref})-\pmf(T^{-1})$ is a common contribution to all structures. It does not affect phase equilibrium properties and merely serves as a normalizing constant. It may be calculated from the relation
\begin{equation}
 \pmf\left(T^{-1}_\mathrm{ref}\right)-\pmf\left(T^{-1}\right) = \ln \sum_{\tilde{\mathtt{x}}\in \mathtt{X}} \exp \left[-\pmf \left(T^{-1} | \tilde{\mathtt{x}} \right)+\pmf\left(T^{-1}_\mathrm{ref}|\tilde{\mathtt{x}}\right)-\mathtt{A}\left(\tilde{\mathtt{x}}|T^{-1}_\mathrm{ref}\right) \right]. 
\end{equation}
In our case study, phase equilibrium is reached when the two structures are equi-probable: 
\begin{equation}\nonumber
\mathtt{A}\left(\mathtt{ico}|T^{-1}_\mathrm{ss}\right) = \mathtt{A}\left(\mathtt{octa}|T^{-1}_\mathrm{ss}\right). 
\end{equation}

We proceed as follows: after setting the external parameter to the inverse temperature $\lambda=T^{-1}$, we perform two supplementary simulations  to compute $\pmf(T^{-1}|\mathtt{ico})$ and $\pmf(T^{-1}|\mathtt{octa})$ with the AR estimator and the external parameter ranging from $\lambda_\mathrm{min}=6.25$ to $\lambda_\mathrm{max}=14$, taking advantage of the fact that structural transitions do not occur spontaneously in the range of involved temperatures. 
The transition temperature is characterized by the intersection point of our two Landau free energies as a function of inverse temperature, as shown in Fig.~\ref{fig:structural_transition}. The error that is made in the estimation of the transition temperature essentially arises from the uncertainty in the evaluation of $\Delta F$ at $T_{\mathrm{ref}}$. 

\begin{figure}
\centering
\includegraphics[width=0.75\textwidth]{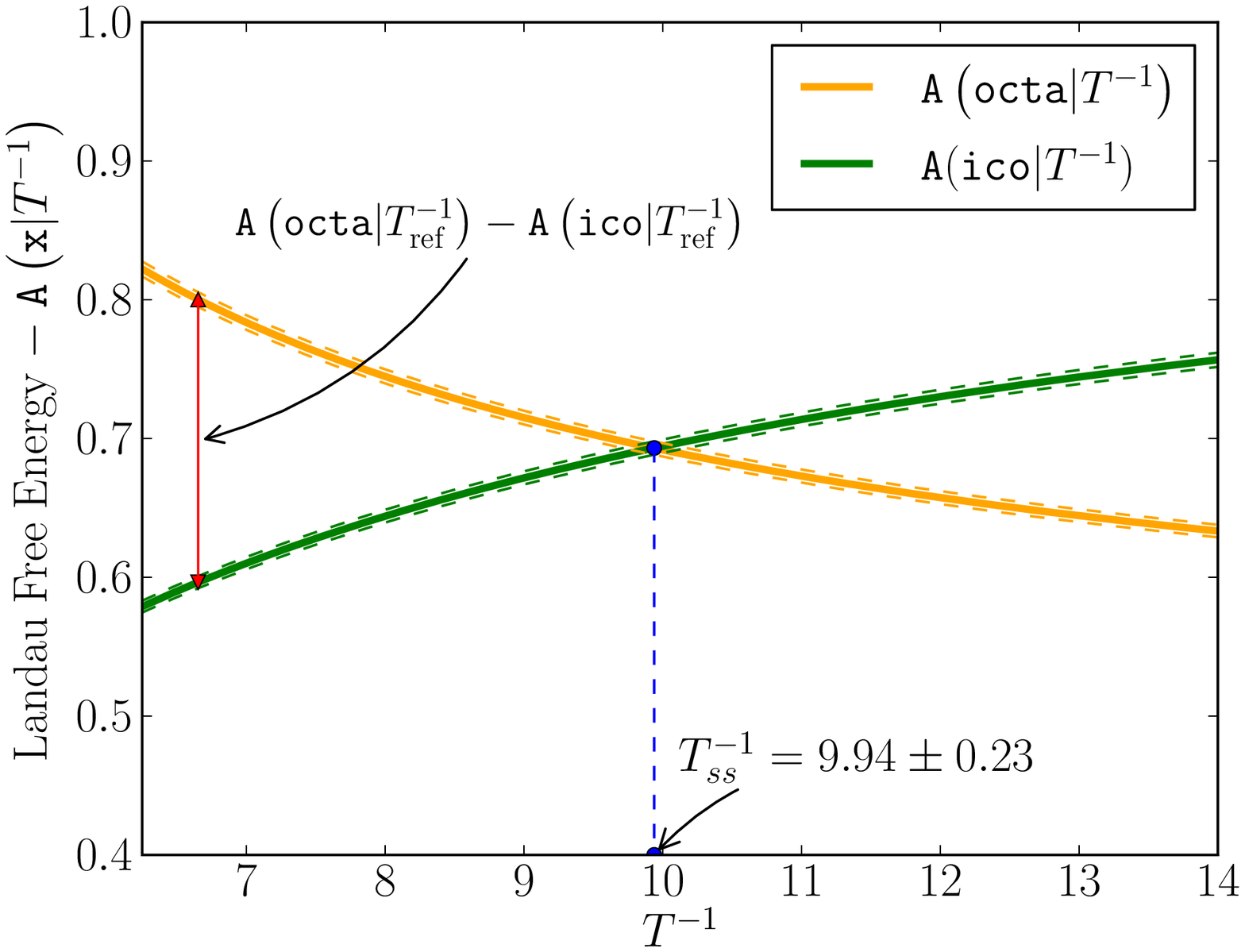}
\caption{Estimation of the free energies of the icosahedral and octahedral structures as a function of inverse temperature.}
\label{fig:structural_transition}
\end{figure}

\section{Summary and Conclusion}

Molecular simulations methods usually requires to vary the value of an external parameter $\lambda$ within a specified range $\varLambda$. In such problems, enhanced ergodicity is achieved if simulations are performed in an expanded ensemble wherein the external parameter behaves like an additional coordinate subject to an auxiliary biasing force. The various methods of computing thermodynamic expectations and free energies which can be used in the expanded ensemble framework are reviewed below
\begin{itemize}
\item[(a)] \emph{Thermodynamic integration} to estimate the derivative of the free energy along $\lambda$ and then to obtain the free energy through integration by numerical quadrature. This approach resorts to a binning estimator and is used in combination with adaptive biasing force method to construct an auxiliary biasing potential that converges towards the free energy. 

\item[(b)] \emph{Thermodynamic occupation} to directly extract the free energy from occupation probabilities. This approach is based on a binning estimator and may also be employed to adapt the biasing potential directly on the free energy along $\lambda$. 

\item[(c)] \emph{Free energy perturbation} to extract the free energies from partition function ratios using a simple standard reweighting estimator. This approach requires substantial overlaps between the various $\lambda$-samples and has been rarely used. 

\item[(d)] Post-treatment procedures such as the \emph{self-consistent reweighting} estimator based on Bennett acceptance ratio method (STWHAM) or the \emph{histogram reweighting} estimator based on thermodynamic integration along the reaction coordinate (CZAR). The goal is to estimate the desired thermodynamic property more accurately than with estimators (a-c). The latter procedure is employed to compute the free energy along reaction coordinates using mechanical restraints. 
\end{itemize}
Remarkably, conditioning over the external parameter in the expanded ensemble provides us with a unifying methodology: once conditioned, the various estimators associated with methods (a-d) become equivalent and exhibit systematically reduced statistical variances. In practice, simulations performed with the conditioned estimators are facilitated by the fact that the external parameter need not being sampled. Characterizing the structural transition temperature in LJ$_{38}$ has been shown to be a simple task. The advocated conditioning approach is well positioned to further extend the range of applicability of Monte Carlo and molecular dynamics techniques to the calculation of free energies and thermodynamic properties in condensed matter systems. 

The avenue for future research on the conditioned estimators will involve formulating and constructing biasing potentials that allow optimal variance reduction for estimating thermodynamic expectations within a given set of external parameter values. 

\section*{Supplementary Material}

See \textcolor{blue}{Supplementary Material} for a description of the algorithms used in  Section~\ref{application:lj38}. 

\begin{acknowledgements}
The authors thank Benjamin Jourdain for drawing our attention on the connection between conditioning and adiabatic reweighting and for indicating us the delta method (Theorem A.2 given in Appendix A). We are also grateful to John Chodera, Lingling Cao, Gabriel Stoltz and a reviewer for many helpful comments and discussions. This work was supported by the Energy oriented Centre of Excellence
(EoCoE), grant agreement number 676629, funded within the Horizon2020
framework of the European Union.
\end{acknowledgements}
\appendix
\section{Asymptotic variances of estimators}
\label{appendix:proof}
\subsection*{Delta method}

Since expectations are considered with respect to the expanded ensemble, we write the dependence on the biasing potential $\aux(\lambda)$ explicitly when needed and use the notation $\vara$ adopted throughout the article for the variance. The lemma below establishes a general and useful property of covariance matrices 
\begin{lemma}\label{lemma}
Let $\Gamma$ be the covariance matrix of a random vector $Y$ taking its values in $\mathbbm{R}^d$  and $u \in \mathbbm{R}^d$ be a constant vector. Then, we have  
\begin{equation}\label{epcov}
 u^T \Gamma u = \vara \left[ u^T Y  \right]. 
\end{equation}
\end{lemma}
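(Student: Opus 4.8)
The plan is to prove the identity~\eqref{epcov} by a short manipulation in matrix notation, after recalling the definition of the covariance matrix. First I would introduce the centred random vector $Z = Y - \eaux[Y]$, with components $Z_i = Y_i - \eaux[Y_i]$, so that by definition $\Gamma = \eaux\big[ Z Z^T \big]$, i.e.\ its $(i,j)$ entry is $\Gamma_{ij} = \cov[Y_i,Y_j] = \eaux\big[ Z_i Z_j \big]$. In particular $\Gamma$ is symmetric, its diagonal entries are the variances $\vara[Y_i]$, and everything is well defined because the $Y_i$ are assumed to have finite second moments.

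Next I would compute, using linearity of $\eaux$ to move the constant vector $u$ in and out of the finite-dimensional expectation,
\[
  u^T \Gamma u \;=\; u^T\, \eaux\big[ Z Z^T \big]\, u \;=\; \eaux\big[ u^T Z\, Z^T u \big] \;=\; \eaux\big[ (u^T Z)^2 \big],
\]
and then observe that, since $u$ is constant, $u^T Z = u^T Y - u^T \eaux[Y] = u^T Y - \eaux[u^T Y]$, so the right-hand side equals $\eaux\big[(u^T Y - \eaux[u^T Y])^2\big]$, which is precisely $\vara[u^T Y]$. The componentwise version is the same computation: expand $u^T\Gamma u = \sum_{i,j} u_i u_j \eaux[Z_i Z_j]$, pull the finite double sum inside $\eaux$, and recognise $\sum_{i,j} u_i u_j Z_i Z_j = \big(\sum_i u_i Z_i\big)^2 = (u^T Z)^2$.

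I do not expect any genuine obstacle here; the statement is essentially bookkeeping. The only points deserving a word of care are that the index set $\{1,\dots,d\}$ is finite, so interchanging the sum with $\eaux$ is unproblematic, and that the $Y_i$ have finite variances, so that both $\Gamma$ and $\vara[u^T Y]$ are meaningful — both hold in the intended applications of the lemma, where $Y$ collects bounded weighing factors (or indicator and conditional-probability values) evaluated along a finite Markov chain.
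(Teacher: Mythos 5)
Your proof is correct and follows essentially the same route as the paper: a direct expansion of the quadratic form $u^T\Gamma u$ componentwise, using linearity to identify it with $\vara[u^TY]$. The only cosmetic difference is that you work with the explicitly centred vector $Z=Y-\eaux[Y]$ and write $\Gamma=\eaux[ZZ^T]$, whereas the paper invokes the bilinearity (scalar-product property) of the covariance to pull the double sum inside $\cov(\cdot,\cdot)$; the computations are the same.
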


\begin{proof}
Let write $Y=(Y^i)_{1 \leq i \leq d}$ and $u=(u^i)_{1 \leq i \leq d}$. By definition, element $\Gamma_{ij}$ of covariance matrix $\Gamma$ is equal to $\cov \left(Y^i,Y^j\right)$
the covariance of the one-dimensional random variables $Y^i$ and $Y^j$, defined by 
\begin{equation}\nonumber
 \cov [Y^i,Y^j] = \eaux\left[ Y^iY^j \right] - \eaux\left[ Y^i \right] \eaux\left[ Y^j \right]. 
\end{equation}
Since the covariance has scalar product properties, we have the following sequence of equalities
\begin{multline*}
 u^T \Gamma u = \sum_{i=1}^d\sum_{j=1}^d u^i \cov(Y^i,Y^j)u^j = \\=  \cov (u^T Y,u^T Y) = \vara (u^TY).
\end{multline*}
\end{proof}

The delta method will allow us to characterize the asymptotic variance of all aforementioned estimators. It consists of applying the generalized central limit theorem that follows:  
\begin{theorem}
\label{thetheorem}
Let $\left\{ Y_m \right\}_{m\geq 1}$ be a sequence of independent, identically distributed and square integrable random vectors taking their values in $\mathbbm{R}^d$. Let $\mu$ and $\Gamma$ respectively denote the expected vector and the covariance matrix of the $Y_m$ and $\overline{Y}^M=\tfrac{1}{M} \sum_{m=1}^M Y_m $.   Let $g: \mathbbm{R}^d \mapsto \mathbbm{R}$ be a function that is differentiable at $\mu$. Then, we have the following convergence in law \begin{equation}\nonumber
 \sqrt{M} \left( g(\overline{Y}^M) - g(\mu)\right) \convlaw \mathcal{N} \big(0, \nabla g(\mu)^T \Gamma \nabla g(\mu) \big). 
 \end{equation}
\end{theorem}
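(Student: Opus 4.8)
The plan is to derive this generalized central limit theorem --- the \emph{delta method} --- from the ordinary multivariate central limit theorem together with a first-order Taylor expansion of $g$ at $\mu$. First I would apply the multivariate CLT to the i.i.d., square-integrable vectors $Y_m$: the centered and rescaled empirical means satisfy $\sqrt{M}\,(\overline{Y}^M - \mu) \convlaw \mathcal{N}(0,\Gamma)$, and I denote by $Z$ a random vector distributed according to this $d$-dimensional centered Gaussian limit. The strong law of large numbers simultaneously gives $\overline{Y}^M \convas \mu$, hence also $\overline{Y}^M \to \mu$ in probability.

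Next I would exploit the differentiability of $g$ at $\mu$ in the form $g(y) = g(\mu) + \nabla g(\mu)^T (y-\mu) + \| y - \mu \|\,\eta(y)$, where the remainder coefficient satisfies $\eta(y) \to 0$ as $y \to \mu$, and we set $\eta(\mu) = 0$ by convention. Evaluating at $y = \overline{Y}^M$ and multiplying by $\sqrt{M}$ produces the decomposition
\begin{equation}\nonumber
\sqrt{M}\,\big(g(\overline{Y}^M) - g(\mu)\big) = \nabla g(\mu)^T\,\sqrt{M}\,(\overline{Y}^M - \mu) \;+\; \eta(\overline{Y}^M)\,\big\| \sqrt{M}\,(\overline{Y}^M - \mu) \big\|.
\end{equation}
The first term on the right-hand side is the image of the sequence $\sqrt{M}\,(\overline{Y}^M - \mu)$ under the fixed continuous linear map $z \mapsto \nabla g(\mu)^T z$, so by the continuous mapping theorem it converges in law to $\nabla g(\mu)^T Z$, a centered Gaussian variable whose variance, by Lemma~\ref{lemma} applied with $u = \nabla g(\mu)$ and the random vector $Z$, equals $\nabla g(\mu)^T \Gamma \nabla g(\mu)$.

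It then remains to show that the remainder term $\eta(\overline{Y}^M)\,\| \sqrt{M}\,(\overline{Y}^M - \mu) \|$ converges to zero in probability. Here I would note that $\| \sqrt{M}\,(\overline{Y}^M - \mu) \|$ is tight, being convergent in law to $\| Z \|$, whereas $\eta(\overline{Y}^M) \to 0$ in probability because $\overline{Y}^M \to \mu$ in probability and $\eta$ is continuous at $\mu$ with value $0$ there; the product of an $o_{\mathrm P}(1)$ sequence and a tight sequence is $o_{\mathrm P}(1)$. A final application of Slutsky's theorem, combining the convergence in law of the linear term with the convergence to $0$ in probability of the remainder, yields $\sqrt{M}\,(g(\overline{Y}^M) - g(\mu)) \convlaw \mathcal{N}\big(0,\nabla g(\mu)^T \Gamma \nabla g(\mu)\big)$. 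I expect the main obstacle to be precisely the control of this remainder: one must argue carefully that a sequence tending to $0$ in probability multiplied by a tight sequence still tends to $0$ in probability, and handle the event $\{\overline{Y}^M = \mu\}$ --- of probability zero, yet where the Taylor coefficient $\eta$ would otherwise be ill-defined --- which the convention $\eta(\mu) = 0$ together with the product form above disposes of cleanly.
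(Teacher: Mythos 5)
Your proposal is correct: it is the standard proof of the delta method, combining the multivariate central limit theorem, a first-order Taylor expansion of $g$ at $\mu$ with remainder $\|y-\mu\|\,\eta(y)$, the continuous mapping theorem together with Lemma~\ref{lemma} to identify the limiting variance $\nabla g(\mu)^T\Gamma\nabla g(\mu)$, and Slutsky's theorem to dispose of the remainder term, whose control via tightness of $\|\sqrt{M}(\overline{Y}^M-\mu)\|$ and $\eta(\overline{Y}^M)=o_{\mathrm P}(1)$ you handle correctly. The paper itself does not prove Theorem~\ref{thetheorem} but refers the reader to Ref.~\citenum{jourdain:2013} for exactly this classical argument, so your write-up supplies the omitted proof along the expected lines.
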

We refer the reader to Ref.~\citenum{jourdain:2013} for a proof of this classical result. The variance of the Gaussian variable above is called the asymptotic variance of random variable $g(Y)$. 

Theorem~\ref{thetheorem}  is applied together with Eq.~\eqref{epcov} to express the asymptotic variance of the estimators below as a variance of random variable $\nabla g(\mu)^T Y$: 
\begin{equation}\label{eq:asympvar}
 \nabla g(\mu)^T \Gamma \nabla g (\mu) = \vara \left( \nabla g(\mu)^T Y  \right). 
\end{equation}    
\subsection*{Estimation of conditional expectations}

For the $\estimator{M}{G} (\obs|\lambda)$ estimator given in Eq.~\ref{estimator:generic} and conditioned in subsection~\ref{estimators:conditional_expectation}, we set $Y_m = \left(\g^\lambda_\aux(\varlambda_m,\brc_m) \obs(\lambda,\brc_m) , \g^\lambda_\aux(\varlambda_m,\brc_m) \right)^T \in \mathbbm{R}^2$, $g(r,s)=r/s$. We thus have
$\mu = \left( \eaux \left[\g^\lambda_\aux(\cdot,\cdot) \obs(\lambda,\cdot) \right] , \eaux \left[\g^\lambda_\aux \right] \right)^T$ and 
\begin{multline}\nonumber
\nabla g(\mu) = \left(\begin{array}{cc} 
1/ \eaux\left[ \g_\aux^\lambda \right]  \\
 -  \eaux \left[ \g_\aux^\lambda(\cdot,\cdot)  \obs(\lambda,\cdot) \right] \Big/\eaux\left[ \g_\aux^\lambda \right]^2
 \end{array}\right) \\
 =\dfrac{1}{\eaux\left[\g^\lambda_\aux\right]} \left(\begin{array}{cc} 
1  \\
 - \e \left[ \obs |\lambda \right]  
 \end{array}\right) 
 .\end{multline} 
Resorting to Eq.~\eqref{eq:asympvar}, the asymptotic variance $\nabla g(\mu)^T \Gamma \nabla g (\mu)$ is therefore 
\begin{multline}
\sigma_\aux^2 \left[ \estimator{M}{G} (\obs|\lambda) \right] =  \dfrac{1}{\eaux\left[\g^\lambda_\aux\right]^2} \vara \left[ \left(\begin{array}{c}
1 \\   - \e\left[\obs | \lambda \right]
\end{array}\right)^T 
\left(\begin{array}{c}
\g^\lambda_\aux(\cdot,\cdot) \obs(\lambda,\cdot) \\ \g^\lambda_\aux(\cdot,\cdot)  
\end{array}\right) \right] \nonumber 
\\ = \dfrac{\vara \left[\g_\aux^\lambda \obs^\lambda  \right]}{\lpaux(\lambda)^2}, \nonumber
\end{multline}
recalling that $\obs^\lambda (\rc) = \obs(\lambda,\rc) - \e\left[\obs|\lambda \right]$ and that $\eaux\left[\g^\lambda_\aux\right] = \lpaux(\lambda)$. The square root of the asymptotic variance  of $\estimator{M}{G} (\obs|\lambda)$ estimator is given in Eq.~\ref{asymptotic_error_phi}. Along the same line of reasoning, the asymptotic variance of adiabatic reweighting estimator $\estimator{M}{\Pi}(\obs|\lambda)$ can be deduced after substituting $\pi_\aux^\lambda$ for $\g_\aux^\lambda$. 

\subsection*{Estimation of total expectations}

For the generic estimator $\estimator{M}{G} (\obs)$ of total expectation $\e\left[ \obs \right]$ that is given in Eq.~\eqref{estimator:generic_total}, we set $g(r,s)=r/s$ together with 
\begin{eqnarray}\nonumber
 &Y & = 
 \left( 
\begin{array}{cc}
\sum_{\lambda\in\varLambda} \obs(\lambda,\brc)  \dsg^{\lambda}_\aux (\varlambda,\brc)  \\
\dsg_\aux (\varlambda,\brc)
\end{array} 
 \right), \\[0.5cm] \nonumber
 & \eaux[Y] &= \left( \begin{array}{cc} \eaux \left[\sum_{\lambda\in\varLambda}\obs(\lambda,\cdot) \dsg_\aux^\lambda(\cdot,\cdot) \right] \\ \eaux \left[\dsg_\aux \right] \end{array} \right)  , \\[0.5cm] 
& \nabla g \left(\eaux [Y] \right) & = \dfrac{1}{\eaux [\dsg_\aux ]} \left( 
\begin{array}{cc}
1\\
-\e [\obs]
\end{array} 
 \right), \nonumber
\end{eqnarray}
where $\dsg^\lambda_\aux(\varlambda,\brc) = \exp[-\aux(\lambda)] \g^{\lambda}_\aux (\varlambda,\brc)$ and $\dsg_\aux = \sum_{\lambda\in\varLambda} \dsg^\lambda_\aux $. The scalar product of the first and third vectors above is 
 \begin{equation} 
 Y^T \nabla g\left(\eaux [Y] \right) =  \sum_{\lambda\in\varLambda} \obs(\lambda,\brc) \frac{ \dsg^{\lambda}_\aux (\varlambda,\brc)} {\eaux [\dsg_\aux]} - \e[\obs] \frac{\dsg_\aux(\varlambda,\brc) } {\eaux [\dsg_\aux]}. 
 \end{equation}

This expression enables one to obtain the asymptotic variance of the estimator by application of Lemma~\ref{lemma} and Theorem~\ref{thetheorem}
\begin{equation}\nonumber
 \sigmaux^2 \left[\estimator{\infty}{G} (\obs) \right] = \vara  \left[\sum_{\lambda\in\varLambda}  \obs(\lambda,\cdot) \frac{\dsg^{\lambda}_\aux (\cdot,\cdot)} {\eaux [\dsg_\aux] }   - \e[\obs] \frac{\dsg_\aux(\cdot,\cdot)} {\eaux [\dsg_\aux] }  \right]. 
\end{equation}
The asymptotic variance of adiabatic reweighting estimator $\estimator{M}{\Pi} (\obs)$ can be deduced by substituting, $\dspi^\lambda_\aux = \eaux[\dsg^\lambda_\aux |\rc ]$ for $\dsg^\lambda_\aux$, $\dspi_\aux$ for $\dsg_\aux$ and following the same line of reasoning: 
\begin{equation}
 \sigmaux^2 \left[\estimator{\infty}{\Pi} (\obs) \right] = \vara  \left[ \sum_{\lambda\in\varLambda}  \obs(\lambda,\cdot) \frac{\dspi^{\lambda}_\aux (\cdot)} {\eaux [\dspi_\aux] } - \e[\obs] \frac{\dspi_\aux(\cdot) } {\eaux [\dspi_\aux] } \right]. 
\end{equation}
When the observable is an indicator function $\2_{\xis}\circ\xi(q)$ associated with a reaction coordinate $\xi(q)$ that is independent of the external parameter, the asymptotic variance simplifies into
\begin{eqnarray}\label{estimator:total_bobs} \sigmaux^2 \left[\estimator{\infty}{\Pi} (\2_{\xis}\circ\xi) \right] &= &\vara  \left[{\2}^\varLambda_{\xis}\circ\xi \frac{ \dspi_\aux } {\eaux [\dspi_\aux] } \right], 
\end{eqnarray}
where $\2^\varLambda_{\xis}\circ \xi (\rc) = \2_{\xis} \circ\xi(\rc) - \e\left[\2_{\xis}\circ\xi \right]$. 
The formulation of estimator $\estimator{M}{G}$ for observables that are dependent on the external parameter is useful in the free energy context below. 

\subsection*{Estimation of free energies along $\lambda$}

As for the generic estimator of the free energy $\pmf(\lambda)$  considered in subsection~\ref{estimation_free_energy}, the asymptotic variance can be obtained by noticing that the free energy corresponds to the co-logarithm of the total expectation of observable $\1_\lambda$. Here, one applies theorem~\ref{thetheorem} with function $g$ set to the co-logarithm function and $\overline{Y}^M$ set to $\estimator{M}{G}(\1_\lambda)$, where $\mathrm{G}$ corresponds to $\mathrm{R}$ (standard reweighting) and $\mathrm{H}$ (binning) for the FEP and TO methods, respectively. This yields the following expression for the asymptotic variance of the corresponding free energy method
\begin{multline}\label{variance_log}
 \sigmaux^2\left[\widehat{\pmf(\lambda)}^\infty_{\mathrm{FEP/TO}} \right] =   \frac{\sigmaux^2 \left[\estimator{\infty}{G} (\1_\lambda) \right]}{\lpo(\lambda)^2}\\= \vara  \left[ \frac{\dsg^{\lambda}_\aux}{\lpo(\lambda) \eaux \left[\dsg_\aux \right] }  - \frac{\e[\1_\lambda]}{\lpo(\lambda)}\frac{ \dsg_\aux }{\eaux [\dsg_\aux] } \right]. \nonumber
\end{multline}
Noticing that $\e[\1_\lambda]$ is $\lpo(\lambda) $ and that $\lpo(\lambda) \eaux \left[\dsg_\aux \right]$ is $\eaux \left[ \dsg^\lambda_\aux \right]$, one eventually obtain the desired asymptotic variance in a more compact form
\begin{equation} \label{variance_fepto}
\sigmaux^2\left[\widehat{\pmf(\lambda)}^\infty_{\mathrm{FEP/TO}} \right] =  \vara \left[ \frac{\dsg^\lambda_\aux}{\eaux\left[ \dsg^\lambda_\aux \right]} - \frac{\dsg_\aux}{\eaux\left[\dsg_\aux \right]}\right]. 
\end{equation}
The asymptotic variance of the AR method similarly writes
\begin{multline}\label{variance_fear}
 \sigmaux^2\left[\widehat{\pmf(\lambda)}^\infty_{\mathrm{AR}} \right] =  \vara^\mathcal{Q} \left[ \frac{\eaux \left[\dsg^\lambda_\aux(\cdot,q) \big|\rc\right]}{\eaux\left[ \dsg^\lambda_\aux \right]} - \frac{\eaux \left[ \dsg_\aux(\cdot,q) |\rc\right]}{\eaux\left[\dsg_\aux \right]}\right] \\=  \vara \left[ \frac{\dspi_\aux^\lambda}{\eaux\left[ \dspi^\lambda_\aux \right]} - \frac{{\dspi}_\aux}{\eaux\left[{\dspi}_\aux \right]}\right]. 
\end{multline}

\section{Variance reduction through conditioning of self-consistent reweighting}
\label{appendix:STWHAM}

We prove here variance reduction when conditioning is done within self-consistent reweighting. 
Let first assume the state space $\mathcal{Q}$ be enumerable and introduce the indicator functions $H_q(\brc)$ and $Y_{\lambda,q}(\varlambda,\brc) = \1_\lambda(\varlambda) H_q(\brc)$. We define the corresponding multi-dimensional vectors as $H=\left\{ H_q \right\}_{q\in\mathcal{Q}}$ and $Y=\left\{ Y_{\lambda,q} \right\}_{\lambda\in\varLambda,q\in\mathcal{Q}}$. The averaged vector $\overline{Y}^M = \tfrac{1}{M}\sum_{m=1}^M Y_m$ constructed from the Markov chain $\{ \varlambda_m,\brc_m \rbrace_{1\leq m \leq M}$ encodes all the information that is necessary to evaluate any observable estimate through self-consistent reweighting~\eqref{estimator:mbar}. It contains in particular the information that is required to solve the associated set of self-consistent equations~\eqref{system:SC} wherein $M_\lambda = \bar{\1}^M_\lambda$. Since the Markov chain is generated according the $\paux(\varlambda,\brc)$ distribution, we have for all $\lambda \in \varLambda$
\begin{equation}
\eaux \left[ \overline{\1}^\infty_\lambda \big| \overline{H}^\infty \right] = \overline{\1}^\infty _\lambda  = \eaux\left[ \1_\lambda\right] =\lpaux(\lambda)
\end{equation}
and, more generally
\begin{equation}\label{eq:lemu}
 \eaux \left[ \overline{Y}^\infty \big| \overline{H}^\infty \right] = \overline{Y}^\infty = \eaux \left[ Y \right]. 
\end{equation}
The law of total covariance allows us to decompose the covariance matrix of $Y$ into the following sum
\begin{equation}\label{law:totalcovariance}
\Gamma \left[ Y \right] = \eaux \left[ \Gamma [ Y| H ]\right] + \Gamma \left[ \eaux[Y|H] \right]
\end{equation}
where $\Gamma [ Y| H ]$ is the conditional covariance of $Y$ given $H$ and $\Gamma \left[ \eaux[Y|H] \right]$ is the covariance of the conditionally expected value of $Y$ given $H$. Covariance matrices are semidefinite. Herein, we consider that the three covariance matrices are definite positive, otherwise the problem would be degenerate and sampling might not be necessary. The following strict inequality thus holds 
\begin{equation} \label{vitue}
 \nabla g (\overline{Y}^\infty)^T \eaux \left[ \Gamma [ Y| H ]\right] \nabla g (\overline{Y}^\infty) > 0, 
\end{equation}
where $g$ denote the smooth function $\overline{Y}^M \mapsto \estimator{M}{SC}(\obs | \lambda)$, i.e. the (unknown) function returning the estimate from the given data $\overline{Y}^M$. The estimate being unique, $\nabla g (\overline{Y}^\infty)$ is a non zero vector. The asymptotic variance of the self-consistent reweighting estimator writes
\begin{equation}
 \sigmaux^2 \left[ \estimator{\infty}{SC} (\obs|\lambda) \right] =  \nabla g (\overline{Y}^\infty)^T \Gamma [ Y ]\nabla g (\overline{Y}^\infty). 
\end{equation}
As a result of~\eqref{eq:lemu},~\eqref{law:totalcovariance} and~\eqref{vitue}, the asymptotic variance of the conditioned self-consistent reweighting estimator is systematically lowered 
\begin{equation}
 \nabla g (\overline{Y}^\infty)^T \Gamma [ \eaux \left[ Y| H\right] ]\nabla g (\overline{Y}^\infty) = \sigmaux^2 \left[ \estimator{\infty}{\Pi} (\obs|\lambda) \right]  <  \sigmaux^2 \left[ \estimator{\infty}{SC} (\obs|\lambda) \right]. 
\end{equation}

When the state space $\mathcal{Q}$ is not enumerable, we may first approximate the observable expectation using appropriate histograms of the energies $H_{\lambda,\pot}$ and of the observable values $H_{\lambda,\pot,\obs}$ (see Section 2.6 in Ref.~\citenum{chodera:2007}), perform the conditioning, compare the asymptotic limit and eventually consider the small bin-width limit to conclude that variance reduction still holds.

\section*{Algorithms}
\label{sec:abf}
The ABF method in expanded ensemble is described in Algorithm~\ref{algo:abf_ee}.  
The description that is given is general and encompasses both the harmonic and linear couplings. The external parameter couples either harmonically to the $Q_4$ coordinate or linearly to the potential energy $\mathcal{V}_\mathrm{LJ}(\rc)$ so as to scale the inverse temperature. $\mathcal{N}(\mu,\sigma)$ denotes the normal law of mean $\mu$ and standard deviation $\sigma$. We describe both Metropolis sampling and overdamped Langevin dynamics. In practice, we used the latter dynamics with a small time step of $10^{-4}$ LJ units. 

\begin{algorithm}[H]
\caption{ABF algorithm in expanded ensemble~\cite{cao:2014} with conditioning for $\lambda \in \varLambda$. The $K$ replicas of the system share the same biasing force. Replicas are propagated using either a Metropolis sampling or Langevin dynamics. The integration of the biasing force is performed by numerical quadrature based on trapezoidal rule. }
\label{algo:abf_ee}
\begin{algorithmic}
\vspace*{0.25cm}
\For {$\lambda \in \varLambda$}
\State $C(\lambda) \gets 0$, $D(\lambda) \gets 0$
\EndFor
\For {$m=1$ to $M$}          
\For {$k=1$ to $K$}
\State $\rc^k = \brc^k_{m-1} + \sqrt{2\tau} G^k_m$ with $G^k_m \sim \mathcal{N}(0,1)$ 
\If {Metropolis algorithm} 
\State $r \sim \mathbb{U}_{[0,1[}$ \Comment $r$ is drawn uniformly in $[0,1[$
\If {$r\leq \dfrac{\sum_{\lambda\in\varLambda} \exp\left[\aux_{m-1}(\lambda) - \pot(\lambda,\trc^k)\right] }{\sum_{\lambda\in\varLambda} \exp \left[\aux_{m-1}(\lambda) -\pot (\lambda,\brc^k_{m-1}) \right] } $} 
    \State $\brc^k_m\gets \trc^k$
\Else
   \State $\brc^k_m\gets \brc^k_{m-1}$
\EndIf
\ElsIf {Overdamped Langevin dynamics}
   \State $\brc^k_m = \trc^k + \tau \nabla_\brc \ln \sum\limits_{\lambda\in\varLambda} \exp\left[\aux_{m-1}(\lambda) - \pot(\lambda,\rc^k)\right]$ 
\EndIf
\EndFor
\For {$\lambda\in \varLambda$} 
\State $\pi_{\aux_{m-1}}^\lambda \big( \brc_m^k \big) =\dfrac{ \exp\left[\aux_{m-1}(\lambda) - \pot(\lambda,\trc^k)\right]}{ \sum_{\varlambda\in\varLambda} \exp\left[\aux_{m-1}(\varlambda) - \pot(\varlambda,\trc^k)\right]}$
\State
\State $C(\lambda) \gets C(\lambda) + \sum_{k=1}^K \partial_\lambda \pot\big(\lambda,\brc^k_m\big) \bpi_{\aux_{m-1}}^\lambda \big( \brc_m^k \big)$
\State $D(\lambda) \gets D(\lambda) + \sum_{k=1}^K  \bpi_{\aux_{m-1}}^\lambda \big( \brc_m^k \big)$
\State $\aux^\prime_{m} (\lambda) = \dfrac{C(\lambda)}{D(\lambda)}$ \Comment biasing force update
\State $\aux_{m} (\lambda) \leftarrow \int^{\lambda}_{\lambda^0} \aux^\prime_m(\varlambda)d\varlambda$ \Comment numerical integration 
\State $\aux_m(\lambda) \gets \aux_m(\lambda) + \ln \sum{\varlambda\in\varLambda} \exp\big[-\aux_m(\varlambda)\big]$
\EndFor
\EndFor
\end{algorithmic}
\end{algorithm}

Once the biasing force has converged to the corresponding free energy using Algorithm~\ref{algo:abf_ee}, the expected value of any observable can be estimated using the adiabatic reweighting estimator described in Algorithm~\ref{algo:expectation}. \\

\vspace*{0.0cm}

\begin{algorithm}[H]
\caption{Algorithm for sampling the marginal probability in the expanded ensemble with biasing potential $\wpmf(\lambda)$ homogeneous in time and for estimating the conditional and total expectations of observable $\obs(\varlambda,\brc)$ based on the conditioning procedure (adiabatic reweighting estimator).}
\label{algo:expectation}
\begin{algorithmic}
\vspace*{0.25cm}
\State $\tilde{C} \gets 0$, $\tilde{D} \gets 0$
\For {$\lambda \in \varLambda$}
\State $C(\lambda) \gets 0$, $D(\lambda) \gets 0$
\EndFor
\For {$m=1$ to $M$}          
\For {$k=1$ to $K$}
\State $\trc^k = \brc^k_{m-1} + \sqrt{2\tau} G^k_m$   with $G^k_m \sim \mathcal{N}(0,1)$
\If {Metropolis algorithm} 
\State $r \sim \mathrm{U}_{[0,1[}$ 
\If {$r\leq \dfrac{\sum_{\lambda\in\varLambda} \exp\left[A(\lambda) - \pot(\lambda,\trc^k)\right]}{\sum_{\lambda\in\varLambda}\varLambda \exp \left[A(\lambda) -\pot (\lambda,\brc^k_{m-1}) \right] }$} 
   \State $\brc^k_m\gets \trc^k$
\Else 
   \State $\brc^k_m\gets \brc^k_{m-1}$
\EndIf
\ElsIf {Overdamped Langevin dynamics}
   \State $\brc^k_m = \trc^k + \tau \nabla_\brc \ln \sum_{\lambda\in\varLambda} \exp\left[A(\lambda) - \pot(\lambda,\trc^k)\right] $ 
\EndIf
\EndFor
\For {$\lambda\in \varLambda$}
\State $\bpi_{A}^\lambda \big( \brc_m^k \big) = \dfrac{\exp\left[A(\lambda) - \pot(\lambda,\trc^k)\right]}{\sum_{\varlambda\in\varLambda} \exp\left[A_{m-1}(\varlambda) - \pot(\varlambda,\trc^k)\right]}$
\State
\State $C(\lambda) \gets C(\lambda) + \sum_{k=1}^K \obs\big(\brc^k_m\big) \bpi_{\wpmf}^\lambda \big( \brc_m^k \big)$
\State $D(\lambda) \gets D(\lambda) + \sum_{k=1}^K  \bpi_{\wpmf}^\lambda \big( \brc_m^k \big)$
\State $\dspi^\lambda_{A}(\brc_m^k) = \exp\left[-A(\lambda) \right] \bpi_{A}^\lambda \big( \brc_m^k \big) $
\EndFor
\State $\tilde{C} \gets \tilde{C} + \sum_{k=1}^K \sum_{\lambda\in\varLambda} \dspi^\lambda_{A}(\brc_m^k) \obs\big(\lambda,\brc^k_m\big)$
\State $\dspi_{A}(\brc_m^k) = \sum_{\lambda\in\varLambda}   \dspi^\lambda_{A}(\brc_m^k) \obs\big(\lambda,\brc^k_m\big)$
\State $\tilde{D} \gets  \tilde{D} + \sum_{k=1}^K \dspi_{A}(\brc_m^k)$
\EndFor
\For {$\lambda\in \varLambda$} 
\State $\estimator{M}{\Pi}(\obs |\lambda ) = {C(\lambda) }/{D(\lambda)}$ \Comment Estimate of $\e\left[\obs|\lambda \right]$
\EndFor
\State $\estimator{M}{\Pi}(\obs) = {\tilde{C}}\slash{\tilde{D}}$ \Comment Estimate of $\e\left[\obs\right]$
\end{algorithmic}
\end{algorithm}
For both algorithms, we used $K=12$ replicas of the system. The adaptive or time-homogeneous biasing potential and its gradient are shared by all the replicas which are propagated using $K$ distinct cores on a parallel computer architecture. \\

%

\end{document}